\documentclass[journal]{IEEEtran}
\usepackage{amsmath,amssymb}
\usepackage{graphicx,graphics,color,psfrag}
\usepackage{diagbox}
\usepackage{cite,balance}
\usepackage{makecell}
\usepackage{caption}
\usepackage{framed}
\captionsetup{font={small}}
\allowdisplaybreaks
\usepackage{algorithm}
\usepackage{algorithmic}
\usepackage{accents}
\usepackage{amsthm}
\usepackage{bm}
\usepackage{url}
\usepackage{bbding}
\usepackage[english]{babel}
\usepackage{multirow}
\usepackage{enumerate}
\usepackage{cases}
\usepackage{stfloats}
\usepackage{dsfont}
\usepackage{color,soul}
\usepackage{amsfonts}
\usepackage{cite,graphicx,amsmath,amssymb}
\usepackage{fancyhdr}
\usepackage{hhline}
\usepackage{graphicx,graphics}
\usepackage{array,color}
\usepackage{mathtools}
\usepackage{amsmath}
\usepackage[T1]{fontenc}
\usepackage{float}  
\usepackage{stfloats}
\usepackage{siunitx}
\usepackage{subfloat}
\usepackage[caption=false,font=footnotesize,labelfont=rm,textfont=rm]{subfig}

\hyphenation{op-tical net-works semi-conduc-tor IEEE-Xplore}
\def\BibTeX{{\rm B\kern-.05em{\sc i\kern-.025em b}\kern-.08em
		T\kern-.1667em\lower.7ex\hbox{E}\kern-.125emX}}
\usepackage{balance}

\newtheorem{lemma}{\emph{\underline{Lemma}}}

\newtheorem{remark}{\bf \emph{\underline{Remark}}}

\def\rank{{\operatorname{rank}}}

\def\({\left(}
\def\){\right)}

\setcounter{page}{1}



\def\b0{{\mathbf{0}}}







\newcommand{\diag}{\mathrm{diag}}


\begin{document}
\captionsetup[figure]{name={Fig.}}
\title{Mixed Near-field and Far-field Localization in Extremely Large-scale MIMO Systems}
\author{Cong Zhou, Changsheng You,~\IEEEmembership{Member,~IEEE}, Chao Zhou, Hongqiang Cheng, Shuo Shi,~\IEEEmembership{Member,~IEEE}
	\thanks{Cong Zhou is with the Department of Electronic and Electrical Engineering, Southern University of Science and Technology, Shenzhen 518055, China and also with the School of Electronics and Information Engineering, Harbin Institute of Technology, Harbin, 150001, China (e-mail:  zhoucong@stu.hit.edu.cn).}
	\thanks{Changsheng You, Chao Zhou and Hongqiang Cheng are with the Department of Electronic and Electrical Engineering, Southern University of Science and Technology, Shenzhen 518055, China (e-mails: youcs@sustech.edu.cn, zhouchao2024@mail.sustech.edu.cn, chenghq2023@mail.sustech.edu.cn).}
	\thanks{Shuo Shi is with the School of Electronics and Information Engineering, Harbin Institute of Technology, Harbin, 150001, China (e-mail: crcss@hit.edu.cn).}
	\thanks{\emph{Corresponding author: Changsheng You.}}
}
\maketitle
\begin{abstract}
    In this paper, we study efficient \emph{mixed near-field and far-field} target localization methods in extremely large-scale multiple-input multiple-output (XL-MIMO) systems.
    Compared with existing works, we address two new challenges in target localization of MIMO communication systems via using decoupled subspace methods, arising from the \emph{half-wavelength antenna spacing} constraint and \emph{hybrid uniform planar array} (UPA) architectures.
    To this end, we propose a new three-step mixed-field localization method. 
    First, we reconstruct the equivalent signals received at UPA antennas by judiciously designing analog combining matrices over time with minimum recovery errors.
    Second, based on recovered signals, we extend the modified multiple signal classification (MUSIC) algorithm to the UPA architectures by constructing a new covariance matrix of a virtual sparse UPA (S-UPA) to decouple the 2D angles and range estimation.
    Due to the structure of the S-UPA, there exist \emph{ambiguous angles} when estimating true angles of targets.
    In the third step, we design an effective classification method to distinguish mixed-field targets, determine true angles of all targets, as well as estimate the ranges of near-field targets.
    In particular, angular ambiguity is resolved by showing an important fact that the three types of estimated angles (i.e., far-field, near-field, and ambiguous angles) exhibit significantly different patterns in the range-domain MUSIC spectrum. 
    Furthermore, to characterize the estimation error lower-bound, we obtain a matrix closed-form Cram\'er-Rao bounds for mixed-field target localization.
    Finally, numerical results demonstrate the effectiveness of our proposed mixed-field localization method, which improves target-classification accuracy and achieves a lower root mean square error than various benchmark schemes.
\end{abstract}
\begin{IEEEkeywords}
    Near-field communications, extremely large-scale MIMO,  mixed-field, target localization.
\end{IEEEkeywords}
  
\section{Introduction}
Sensing functionality at the base station (BS) has been listed by International Telecommunication Union Radio communication Sector (ITU-R) as one of the six major usage scenarios in future sixth-generation (6G) wireless networks, to accommodate a wide range of emerging applications such as smart home, low-altitude economy (LAE), autonomous driving~\cite{zhang2023dual, you2024next}.
Among various design paradigms, communication-centric designs has received upsurging research interests in both academia and industry, which effectively leverage existing wireless infrastructures (e.g., BSs, wireless backhaul) for enabling ubiquitous and high-resolution sensing. Following this paradigm, we study in this paper how to leverage \emph{extremely large-scale multiple-input multiple-output (MIMO)} (XL-MIMO) systems to achieve \emph{mixed near-field and far-field} target localization \cite{Lu_NFtutorial}.
In particular, we propose efficient methods to address a set of localization challenges by using decoupled subspace methods encountered in XL-MIMO communication systems, which arise from the half-wavelength antenna spacing constraint and the hybrid uniform planar array (UPA) architecture.

\vspace{-5pt}
\subsection{Related Works}
Accurate and wide-range localization lays important foundations for various tasks in cellar networks such as beam management, user tracking, etc.
To achieve this, one promising approach is by leveraging XL-MIMO systems to expand the sensing coverage and achieve high sensing resolution-and-accuracy in both near-field and far-field.
Compared with conventional massive MIMO systems, XL-MIMO increases the number of antennas by another order-of-magnitude (e.g., from 256 to 1024), thus substantially enhancing system spectrum efficiency and spatial resolution. Moreover, the \emph{Rayleigh distance}, which specifies the boundary between radiative near-field and far-field, is greatly increased. For example, when the system operates at $28$ GHz frequency bands, the Rayleigh distance of an 1-meter (m) array is up to 187 m. As such, in future XL-MIMO systems, there may appear the case where some communication users/sensing targets are located in the near-field of the BS, while the others are in the far-field. This thus necessaries efficient designs for mixed-field ISAC in XL-MIMO systems. 

For the communication perspective, most of recent research on XL-MIMO focused on near-field only systems, including e.g., spherical and non-stationary channel modelling \cite{chen2023non}, power scaling laws \cite{LuCommunicating2022}, beamfocusing/beamforming design, channel estimation and beam training~\cite{wu2024near}; while some initial efforts have been devoted to studying mixed-field communication designs such as channel estimation \cite{DaiHF_CE}, interference management \cite{YouHF_IM}, beam scheduling and power control \cite{YouHF_SWIPT}.
On the other hand, for XL-MIMO target sensing/localization, the existing research can be generally classified into three categories, namely, far-field, near-field, and mixed-field localization.

Specifically, for \emph{far-field localization}, various algorithms have been proposed to estimate target angles using typical communication arrays.
Taking digital arrays of half-wavelength inter-antenna spacing as an example, efficient subspace methods such as multiple signal classification (MUSIC) \cite{schmidt1986multiple} and estimating signal parameters via rotational invariance techniques (ESPRIT) \cite{roy1989esprit} have been developed to estimate target angle, which essentially exploit orthogonality between signal and noise subspaces to distinguish impinging sources. Further, for UPA systems,  the one-dimensional (1D) MUSIC and ESPRIT algorithms were extended in \cite{chen1997spatial} and \cite{wang2005comments} to joint estimate two-dimensional (2D) angular parameters (i.e., elevation and azimuth angles). To reduce the computational complexity arising from 2D exhaustive search in 2D MUSIC algorithms, the authors in \cite{fan2018angle} proposed to first exploit a 2D discrete Fourier transform (DFT) algorithm to estimate coarse 2D angles and then refine angle estimation resolution by using angle rotation techniques \cite{fan2018angle}.

Next, for \emph{near-field localization},  spherical wavefronts provide a new degree-of-freedom (DoF) in range estimation, hence allowing for joint estimation of both the target angle and range even in the narrow-band, yet without deploying multiple anchors with strict network synchronization \cite{11146870}. This can be achieved by designing 2D (angle-and-range domain) MUSIC algorithm for uniform linear array (ULA) systems and 3D (elevation/azimuth angles and range domain) for UPA systems.
As 2D/3D exhaustive grid-search needs to be performed for target location, these algorithms  incur prohibitively high computational complexity in practice.
To address this issue, the authors in \cite{zhang2018localization} assumed that the inter-antenna spacing is no larger than \emph{one-quarter wavelength}, hence avoiding intricate \emph{angular ambiguity}.
Based on this, they proposed a reduced-dimensional MUSIC (RD-MUSIC) algorithm, which only requires 1D MUSIC spectrum search in the angular domain via splitting  array steering matrix in terms of angle and range. Besides, by extracting anti-diagonal elements of the original covariance matrix, an effective modified MUSIC algorithm was proposed in \cite{he2011efficient} (yet still based on \emph{one-quarter} wavelength antenna spacing), which decomposed 2D (angle-and-range domain) peak search into two 1D grid-search in the angular and range domains, respectively.
Moreover, for UPA systems, the authors in \cite{huang2023low} leveraged the anti-diagonal elements of signal covariance matrix to decouple the 3D MUSIC algorithm into 2D angle and 1D range estimation problems, whose complexity was further reduced by exploiting  2D DFT algorithm and angle rotation technique in \cite{fan2018angle}. 

{Last, for \emph{mixed-field} localization, due to the coupling between received spherical and planar wavefronts, it is necessary to differentiate far-field and near-field targets. To achieve this, for ULA systems, the authors in \cite{liu2013efficient, he2011efficient} proposed to firstly identify the angle of far-field sources by finding the peak of the 2D (angle-and-range) MUSIC spectrum in the angular domain via setting the range parameter $r\to\infty$ \cite{liu2013efficient, he2011efficient}. Then, far-field signals were removed and the locations of near-field users were estimated from residual signals by using typical methods as for near-field localization.
The authors in~\cite{TS_MUSIC} addressed the mixed near-field and far-field target localization problem by a two-stage MUSIC algorithm, which, however, incurs high computational complexity. 
Furthermore, an enhanced approach was proposed in~\cite{TS_MD} by applying a matrix auto-differencing operation to the fourth-order cumulant (FOC) matrix for achieving superior classification and localization performance in such mixed scenarios.
In view of the above works on mixed-field localization, there still exist several  limitations as listed below.
\begin{itemize}
	\item {\bf{(Half-wavelength inter-antenna spacing)}} For low-complexity mixed-localization design using subspace methods, the existing works rely on a strong assumption that the inter-antenna spacing is no larger than one-quarter wavelength to avoid angular ambiguity, which, however, is incompatible to legacy wireless systems with typically half-wavelength spacing. 
	Moreover, larger number of ambiguous angles arising from \emph{half-wavelength inter} constraints also further complicate the differentiation between far-field and near-field targets.
	\item {\bf{(Hybrid UPA architecture)}} Existing angle/location estimation methods mostly considered fully digital architectures. Nevertheless, these methods may not be applicable to hybrid UPA systems (widely deployed in high-frequency wireless communication and/or ISAC systems), due to the reduced sensing DoFs after analog combining as well as more complex signal processing techniques for UPAs. 
\end{itemize} 
\vspace{-10pt}
\subsection{Contribution and Organization}
To address the above issues, in this paper, we consider an XL-MIMO target localization system for mixed-filed targets as shown in Fig.~\ref{fig:system model}, where the BS is equipped with a hybrid UPA array with \emph{half-wavelength} inter-antenna spacing. Our target is to employ the XL-MIMO BS to localize both near-field and far-field targets, where the targets actively send probing signals to facilitate localization at the BS. The main contributions of this paper are summarized as follows.

First, we propose a new and efficient framework to localize both near-field and far-field targets, which comprises three main steps.
In Step 1, we judiciously design the analog combining matrices at the BS over time to recover the equivalent signals (like a fully-digital architecture) received at the UPA with minimum recovery errors.
Then, in Step 2, we extend the reduced-dimensional MUSIC algorithm to the UPA scenario by constructing an equivalent far-field covariance matrix of a \emph{virtual sparse UPA} (S-UPA) from the original covariance matrix, which decouples the joint 3D angle-and-range estimation into sequential angle and range estimations.
Subsequently, a decoupled 2D angular MUSIC algorithm is developed to further separate the estimation of elevation and azimuth angles, obtaining candidate 2D angles for all targets, including both true far-field and near-field angles as well as the corresponding ambiguous angles induced by the virtual S-ULA.
In Step 3, we design an effective classification method to distinguish mixed-field targets, determine true angles of all targets, as well as estimate the ranges of near-field targets.
In particular, we resolve angular ambiguity by showing an important fact that the three types of estimated angles have significantly different patterns in the range-domain MUSIC spectrum under the half-wavelength inter-antenna spacing constraint.
In addition, to evaluate the lower-bound of the proposed algorithm, we characterize the Cram\'er-Rao Bounds (CRB) for mixed-field target localization.
Moreover, extensive numerical results are presented to demonstrate the effectiveness of our proposed mixed-field target localization method.
It is shown that the proposed algorithm can achieve nearly the same performance with the 3D MUSIC algorithm, while significantly reducing its computational complexity.
In addition, due to larger array aperture, the proposed mixed-field localization method achieves a lower root mean square error (RMSE) than UPAs with one-quarter wavelength inter-antenna spacing.
\textit{Notations}: Lowercase and uppercase boldface letters are used to represent vectors and matrices, such as $\mathbf{a}$ and $\mathbf{A}$.
For vectors and matrices, the symbol $ (\cdot)^{H} $  indicates the conjugate transpose operation, while the symbol $[\mathbf{A}]_{i,j}$ and $[\mathbf{a}]_{n}$ denote the $(i,j)$-th entry of the matrix $\mathbf{A}$ and the $n$-th entry of the vector $\mathbf{a}$, respectively.
Additionally, calligraphic letters are employed to denote sets, while the symbol $\mathbb{Z}$ represents the integer set.
The notations $ \left|\cdot\right| $ and $ \left\lVert \cdot\right\lVert $ refer to the absolute value of a scalar and the $ \ell_{2} $ norm, respectively.
The symbol $\mathbf{I}_K $ represents a $ K $-dimensional identity matrix, while the symbol $ \text{angle}(x) $ denotes the phase of the complex value $ x $.
Finally, the notation $\mathcal{CN}(\mu,\sigma^2)$ represents the complex Gaussian distribution with mean $\mu$ and variance $\sigma^2$
\begin{figure*}[t]
	\includegraphics[width=0.75\textwidth]{./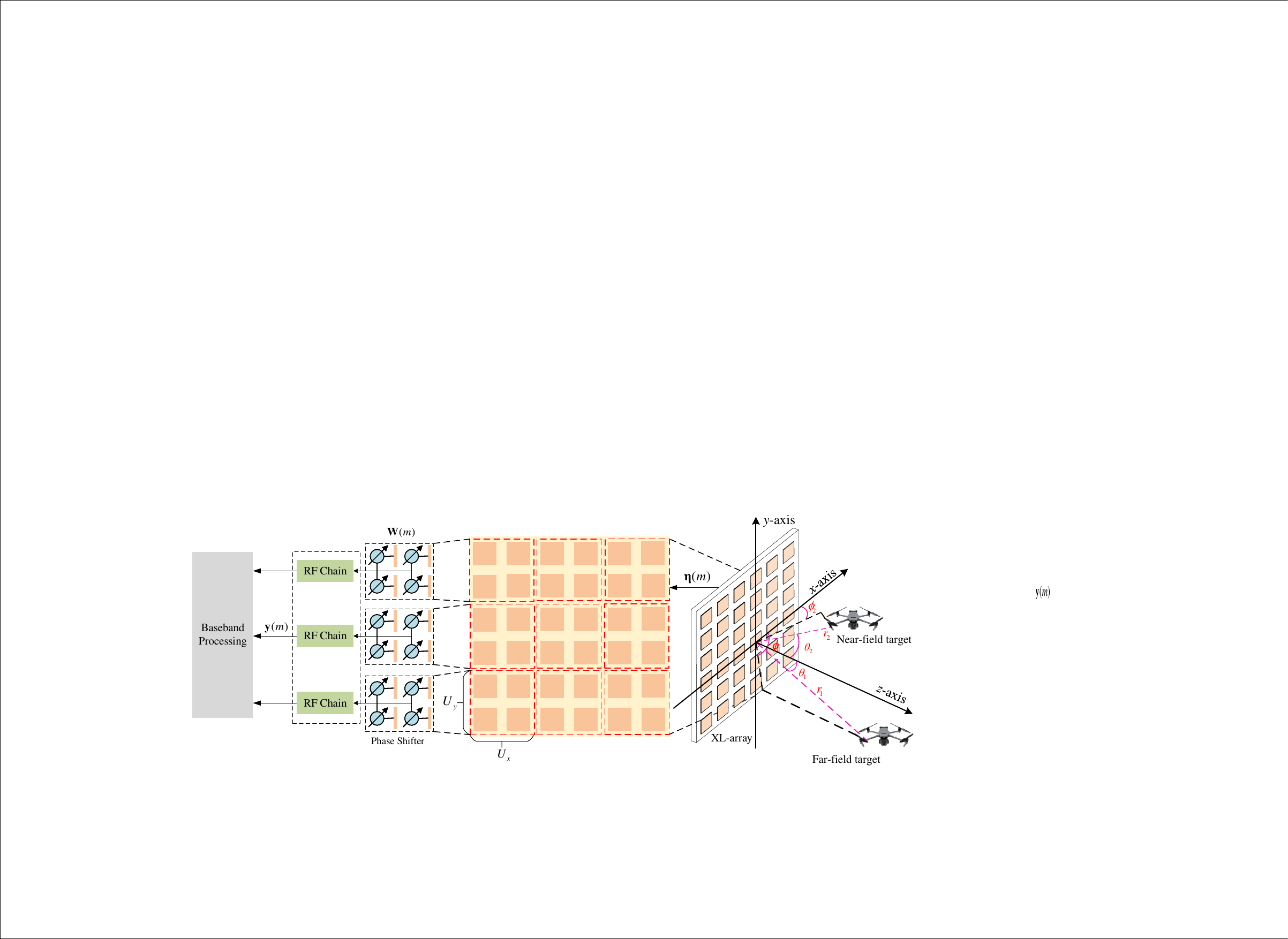}
	\centering
	\caption{Considered mixed-field target localization system, where the BS is equipped with a (sub-connected) hybrid UPA.}
	\label{fig:system model}
	\vspace{-12pt}
\end{figure*}
\vspace{-5pt}
\section{System Model}
We consider an XL-MIMO wireless sensing system shown in Fig.~\ref{fig:system model}, where a BS equipped with a UPA is employed for localizing both near-field and far-field targets. The BS is placed in the $ xy $-plane and centered at the origin, which consists of $ N = N_xN_y $ antennas with $ N_x $ and $ N_y $ denoting the number of antennas along the $ x $-axis and $ y $-axis, respectively. Different from existing literature for mixed-field target localization assuming fully-digital architectures, we consider typical wireless communication infrastructures in this paper. First, the inter-antenna spacing of UPA along the $ x $-axis  and  $ y $-axis is \emph{half-wavelength}, which are respectively denoted by  $ d_x=d_y=d_0 = \frac{\lambda}{2} $, with $ \lambda $ denoting the carrier wavelength. Second, the BS  employs a \emph{hybrid array architecture}, for which $ N $ antennas are connected to $ N_{\rm RF} \ll N $ radio frequency (RF) chains with each one connecting $ U \triangleq  {N}/{N_{\rm RF}} = M_xM_y $ phase shifters as illustrated in Fig. \ref{fig:system model} (referred to as sub-connected architecture). Here, $ U_x $ and $ U_y $ are the number of PSs (assumed to be integers) along the $ x $-axis and $ y $-axis, respectively, and 
we assume $U = {N}/{N_{\rm RF}} $ being an integer without loss of generality. Moreover, we consider the passive sensing scenario (i.e., the BS does not actively transmit probing signals) where both near-field and far-field targets transmit uncorrelated signals to facilitate precise localization at the BS, while the extension to BS active sensing case will be discussed in Section \ref{sec:Extensions and Discussions}. 
\vspace{-10pt}
\subsection{Channel Models}
For the BS, the $ n $-th antenna is located at $ (n_xd_0, n_yd_0, 0) $, where $ n_x \in \mathcal{N}_x \triangleq \{-\frac{N_x-1}{2},\cdots,\frac{N_x-1}{2}\} $ and $ n_y \in \mathcal{N}_y \triangleq \{-\frac{N_y-1}{2},\cdots,\frac{N_y-1}{2}\} $ denote the indices of antennas along the $ x $-axis and $ y $-axis, respectively. As such, the antenna index $n$ can be represented as $ n = n_x+n_yN_x+\frac{N+1}{2} \in \mathcal{N} \triangleq \{1,2,\cdots,N\}$, and the Rayleigh distance of the UPA is given by $ Z_{\rm R} =2(N_x^2d_0^2+N_y^2d_0^2)/{\lambda} $.
Let $K$ denote the number of targets, consisting of $ K_1 $ targets in the far-field and $ K_2 $ targets in the near-field of the UPA, whose channels are modeled as follows.

\underline{\bf{Far-field channel model:}}
For far-field target $k \in \mathcal{K}_1 \triangleq \{1,2,\cdots, K_1\} $, its range with the BS  is  larger than the Rayleigh distance $ Z_{\rm R}$.
For target localization in high-frequency bands, we mainly focus on the line-of-sight (LoS) channel path as the NLoS channel paths exhibit negligible power due to severe path-loss and shadowing \cite{wu2024near}.
Hence, the far-field channel model from target $k$ to the BS, denoted by $\mathbf{h}_{k}\in\mathbb{C}^{N\times 1}$, is given by
\begin{equation}\label{Eq:far-model}
	\mathbf{h}_{k}= \sqrt{N}h_{k} \mathbf{a}(\theta_{{k}},\phi_{{k}}), \forall k \in \mathcal{K}_1,
\end{equation}
where $ h_{k} = \frac{\sqrt{\beta_0}}{r_{k}} $ denotes the complex channel gain with $ \beta_0 $ representing the reference channel power gain at a range of 1 m. 
Moreover, $\mathbf{a}(\theta_{{k}},\phi_{{k}}) $ denotes the far-field steering vector with $ \theta_{{k}} \in [0,\frac{\pi}{2}] $ and $ \phi_{{k}}\in [-\pi,\pi] $ denoting the elevation and azimuth angles of the $ k $-th target ($k\in\mathcal{K}_1$), respectively.
In particular, $ \mathbf{a}(\theta_{{k}},\phi_{{k}}) $ is given by
\begin{equation}
	\mathbf{a}(\theta_{{k}},\phi_{{k}}) = \mathbf{a}_y(\alpha_{{k}})\otimes \mathbf{a}_x(\beta_{{k}}), \forall k \in \mathcal{K}_1,
\end{equation}
where $ \mathbf{a}_x(\alpha_{{k}}) \in \mathbb{C}^{N_x\times1}$ and  $ \mathbf{a}_y(\beta_{{k}}) \in \mathbb{C}^{N_y\times1}$ represent the channel response vectors for the antennas in the $ x $-axis and $ y $-axis, respectively.
Mathematically, we have $ [\mathbf{a}_x(\alpha_{{k}})]_{n_x+\frac{N_x+1}{2}} = e^{\jmath\frac{2\pi}{\lambda}d_0 n_x \alpha_{k} }$ and $ [\mathbf{a}_y(\beta_{{k}})]_{n_y+\frac{N_y+1}{2}} = e^{\jmath\frac{2\pi}{\lambda}d_0 n_y \beta_{k} }$, where parameters $ \alpha_{k} $ and $ \beta_{k} $ are defined as $$ \alpha_{k} \triangleq \sin\theta_{{k}}\cos\phi_{{k}}, ~~\beta_{k} \triangleq \sin\theta_{{k}}\sin\phi_{{k}}. $$ 

\begin{figure*}[t]
	\includegraphics[width=0.9\textwidth]{./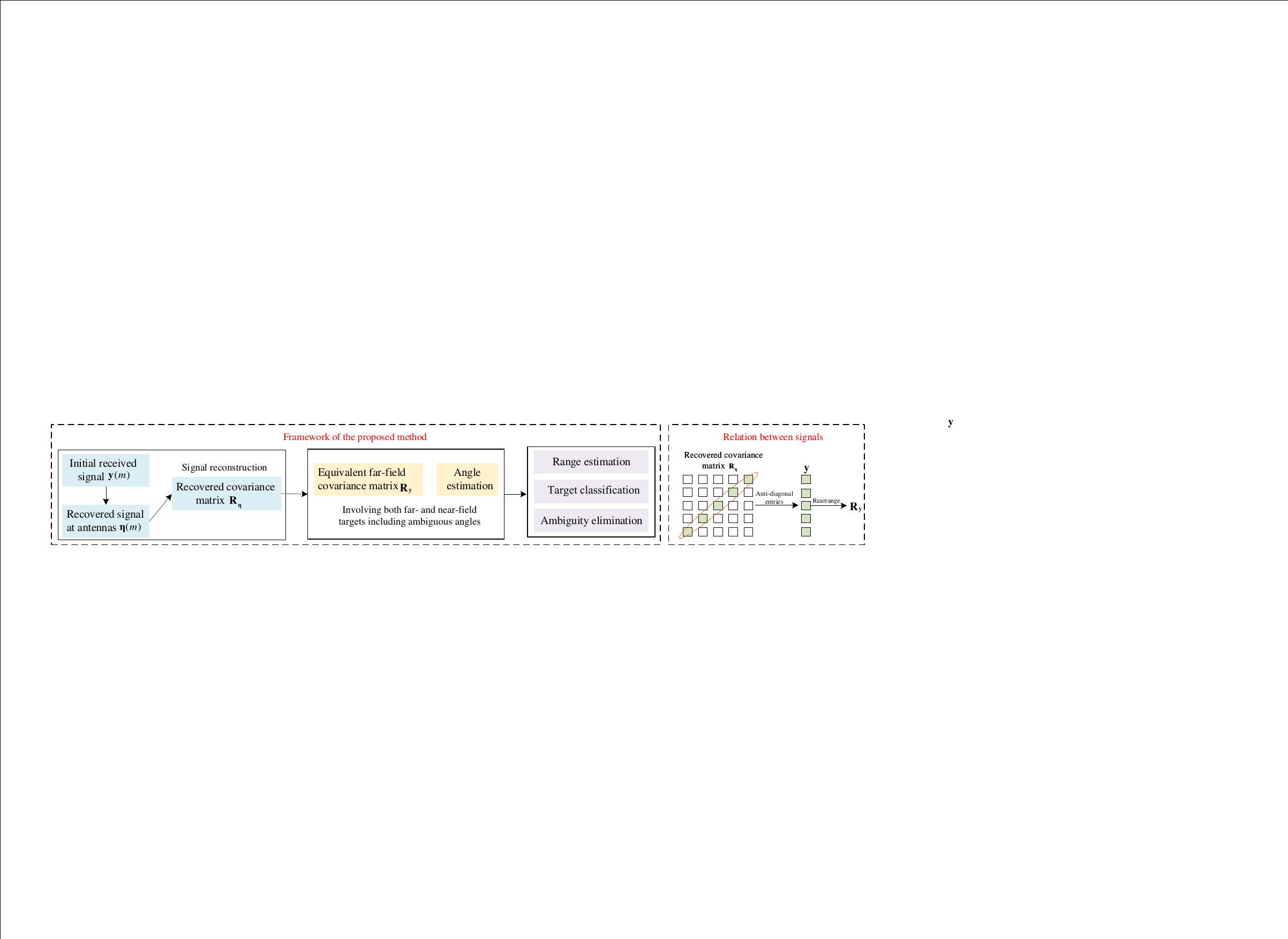}
	\centering
	\caption{The framework of proposed mixed-field localization algorithm and relation between signals.}
	\label{fig:Algorithm}
	\vspace{-12pt}
\end{figure*}
\underline{\bf{Near-field channel model:}}
For near-field target $ k \in \mathcal{K}_2 \triangleq \{K_1+1,K_1+2,\cdots, K\} $, its range $ r_{k} $ is smaller than the Rayleigh distance $Z_{\rm R}$.
As such, the near-field LoS channel from target $ k \in \mathcal{K}_2$ to the BS can be modeled as  \cite{Zhang2022fast}
\begin{equation}\label{Eq:nf-model}
	\mathbf{h}_{k}= \sqrt{N}h_{k} \mathbf{b}(r_{{k}}, \theta_{{k}}, \phi_{{k}}), \forall k \in \mathcal{K}_2,
\end{equation}
where $ \mathbf{b}(r_{{k}}, \theta_{{k}}, \phi_{{k}}) $ denotes its near-field channel steering vector.
Mathematically, $ \mathbf{b}(r_{{k}}, \theta_{{k}}, \phi_{{k}}) $ can be modeled as
\begin{equation}
	\label{near_steering}
	\left[\mathbf{b}\left(r_{{k}},\theta_{{k}}, \phi_{{k}}\right) \right]_{n} = \frac{1}{\sqrt{N}}e^{-\frac{\jmath 2 \pi (r_{k,n}-r_{k})}{\lambda}}, \forall n\in \mathcal{N}, 
\end{equation}
where $ r_{k,n} $ denotes the range from the near-field target $ k\in \mathcal{K}_2 $ to the $ n $-th antenna.
Mathematically, $ r_{k,n} $ can be expressed as
\begin{equation}\label{Eq:r_kn}
	r_{k,n} = r_{k} \Big({1\!+\!\frac{\left(n_x^2+n_y^2\right) d_0^2}{r_{k}^2}\!-\!\frac{2 d_0}{r_{k}}\left(n_x\alpha_{k}+n_y \beta_{k}\right)}\Big)^{\frac{1}{2}}\!, 
\end{equation}
for $k\in \mathcal{K}_2$, where $ r_{k} $ represents the range between the target and the center of the BS.  
Further, by means of two-dimensional Fresnel approximation, $ r_{k,n} $ in \eqref{Eq:r_kn} can be approximated as \cite{pan2022localization}
\begin{equation}
	\begin{aligned}
		&r_{k,n} \approx r_{k}-n_xd_0\alpha_{k}-n_yd_0\beta_{k} \\
		&~~~~\!+\!\frac{n_x^2 d_0^2 (1- \alpha_{k}^2)\!+\! n_y^2 d_0^2 (1- \beta_{k}^2)}{2 r_{k}} \!-\! \frac{n_xn_yd_0^2\alpha_{k}\beta_{k}}{ r_{k}}.
	\end{aligned}
\end{equation} 
Then, the near-field channel steering vector for target $ k\in \mathcal{K}_2 $ can be rewritten as
\begin{equation}
	\label{eq:new near_steering}
	\left[\mathbf{b}\!\left(r_{k},\alpha_{k},\beta_{k}\right) \right]_{{n}} \!=\! \frac{1}{\sqrt{N}}e^{\jmath\frac{2\pi}{\lambda} \big(u_n(\alpha_{k},\beta_{k}) + v_n(r_{k},\alpha_{k},\beta_{k})\big)}, 
\end{equation}
where $ u_n(\alpha_{k},\beta_{k}) \triangleq n_xd_0\alpha_{k}+n_yd_0\beta_{k} $ contains the angle information only, and $ v_n(r_{k},\alpha_{k},\beta_{k}) \triangleq \frac{n_xn_yd_0^2\alpha_{k}\beta_{k}}{ r_{k}} - \frac{n_x^2 d_0^2 (1- \alpha_{k}^2)+n_y^2 d_0^2 (1- \beta_{k}^2)}{2 r_{k}} $ is jointly determined by both the target angles and range.
\vspace{-0.2cm}

\subsection{Signal Model}
The $ K $ targets are assumed to simultaneously transmit pilots in $ M $ time slots which are in  narrow-band and uncorrelated.
As such, the received signal in the UPA antennas at time $m$, denoted by $\boldsymbol{\eta}(m) \in \mathbb{C}^{N\times 1}$, is given by
\begin{equation}
	\begin{aligned}
	\boldsymbol{\eta}(m) = &\mathbf{G}\mathbf{s}(m) +  \mathbf{n}(m) \\
	=&\mathbf{A} \mathbf{s}_{{\rm F}}(m) +\mathbf{B} \mathbf{s}_{{\rm N}}(m)+\mathbf{n}(m),
	\end{aligned}
	\label{eq:received signal at antennas}
\end{equation}
where $ \mathbf{G} \triangleq [\mathbf{A}, \mathbf{B}] \in \mathrm{C}^{N \times K} $ denotes the array steering matrix (ASM) with $ \mathbf{A} \triangleq [\mathbf{a}\left(\theta_{1}, \phi_{1}\right),\mathbf{a}\left(\theta_{2}, \phi_{2} \right),\cdots,\mathbf{a}\left(\theta_{K_1}, \phi_{K_1}\right)]  $ and $ \mathbf{B} \triangleq [\mathbf{b}\left(r_{K_1+1},\theta_{K_1+1}, \phi_{K_1+1}\right),\cdots,\mathbf{b}\left(r_{K},\theta_{K}, \phi_{K}\right)] \in \mathbb{C}^{N \times K_2} $ denoting the ASM for the far- and near-field targets, respectively.
The noise vector is distributed as $ \mathbf{n}(m) \sim \mathcal{C \mathcal { N }}\left(0, \sigma^2 \mathbf{I}_M\right) $ with zero mean and variance $ \sigma^2 $. Moreover, $ \mathbf{s}(m) = [\mathbf{s}^T_{\rm F}(m), \mathbf{s}^T_{\rm N}(m)]^T \in \mathbb{C}^{K \times 1} $ denotes the source vector (i.e., transmitted signals by targets) with $ \mathbf{s}_{\rm F}(m) = [s_1(m),s_2(m),\cdots,s_{K_1}(m)]^T \in \mathbb{C}^{K_1 \times 1}$ and $ \mathbf{s}_{\rm N}(m) = [s_{K_1+1}(m),s_{K_1+2}(m),\cdots,s_{K}(m)]^T \in \mathbb{C}^{K_2 \times 1}$ representing the source sub-vectors for far- and near-filed targets, satisfying
\begin{equation}
	\mathbb{E}\{s_j(m)s_k(m)\}=\left\{\begin{matrix}
	g_k,~~j=k,\\ 
	0,~~j\neq k,
	\end{matrix}\right.
\end{equation}
where $ p_k $ and $  g_k = N|\beta_k|^2p_k $ denote the transmit power and equivalent channel power gain for the $ k $-th target, respectively.

After analog combining, the received signal at discrete time $ m $, denoted by $ \mathbf{y}(m)\in\mathbb{C}^{N_{\rm RF}\times 1}$,
is given by
\begin{equation}
	\begin{aligned}
		&\mathbf{y}(m) \!\!=\!\! \mathbf{W}(m)\boldsymbol{\eta}(m)\\
        &~~~~=\mathbf{W}(m)\mathbf{A} \mathbf{s}_{{\rm F}}(m) \!\!+\!\! \mathbf{W}(m)\mathbf{B} \mathbf{s}_{{\rm N}}(m) \!\!+\!\! \mathbf{W}(m)\mathbf{n}(m),
	\end{aligned}
	\label{eq:received signal}
\end{equation}
where $ \mathbf{W}(m) \in \mathbb{C}^{N_{\rm RF}\times N} $ denotes the analog combining matrix. For the considered sub-connected hybrid beamforming architecture in Fig.~\ref{fig:system model}, $\mathbf{W}(m) $ can be expressed as
\begin{equation}\label{Eq:W}
	\mathbf{W}^T(m) \!=\! \boldsymbol{\Pi} \mathbf{V}(m)  
	\!=\! \boldsymbol{\Pi} \left[\begin{array}{cccc}
	\!\!\!\mathbf{v}_{1}(m) & \mathbf{0} & \ldots & \mathbf{0}\! \\
	\mathbf{0} & \!\mathbf{v}_{2}(m)\! & \ldots & \mathbf{0}\! \\
	\vdots & \vdots & \ddots & \vdots \\
	\mathbf{0} & \mathbf{0} & \ldots & \mathbf{v}_{\rm RF}(m)\!\!\!\!
	\end{array}\right]\!\!,
\end{equation}
where $ \boldsymbol{\Pi} \in \mathbb{C}^{N \times N} $ and $ \mathbf{v}_{f}(m) \in  \mathbb{C}^{U \times 1}, f \in \mathcal{F} \triangleq \{1,2,\cdots, N_{\rm RF}\} $ denote the permutation matrix determined by the sub-connected configuration and the combining vector of the $f$-th RF chain, respectively.
For example,  consider a UPA with $ N_x = 2 $ and $ N_y = 4 $ with $ N_{\rm RF} = 2 $ RF chains. Following the antennas' indices, the combining vectors of the first and second RF chains are given by $ [\mathbf{W}^T(m)]_{:,1} = [v_{1,1}(m),v_{1,2}(m),0,0,v_{1,3}(m),v_{1,4}(m),0,0]^T$, and $ [\mathbf{W}^T(m)]_{:,2} = [0,0,v_{2,1}(m),v_{2,2}(m),0,0,v_{2,3}(m),v_{2,4}(m)]^T $, respectively.
In addition, the permutation matrix $ \boldsymbol{\Pi} $ can be constructed as $ \boldsymbol{\Pi} = [\mathbf{e}_1,\mathbf{e}_2,\mathbf{e}_5,\mathbf{e}_6,\mathbf{e}_3,\mathbf{e}_4,\mathbf{e}_7,\mathbf{e}_8]^T $, where $ \mathbf{e}_n $ denotes the $ n $-th column of the identity matrix $ \mathbf{I}_{8} $.
Hence, the combining matrix $ \mathbf{W}^T(m) $ can be rewritten as $ \mathbf{W}^T(m) = \boldsymbol{\Pi} \times \text{diag}\{\mathbf{v}_1(m),\mathbf{v}_2(m)\} $, where $ \mathbf{v}_1(m) = [v_{1,1}(m),v_{1,2}(m),v_{1,3}(m),v_{1,4}(m)]^T $ and $ \mathbf{v}_2(m) = [v_{2,1}(m),v_{2,2}(m),v_{2,3}(m),v_{2,4}(m)]^T $.
\section{Proposed Mixed-field Localization Method}
In this section, we first elaborate the main design challenges when the BS of a hybrid antenna array is used for localizing the mixed-field targets. Then an efficient localization algorithm is proposed to solve them. 

Specifically, compared with existing works on mixed-field localization, the studied problem  differs in the following three main aspects, which introduce new challenges.
First, most existing works on near-field localization method using decoupled subspace methods (e.g., \cite{zhang2018localization}) rely on the assumption that the inter-antenna spacing is no larger than $\frac{\lambda}{4}$, such that reduced-dimensional/modified MUSIC algorithms can be devised to enable near-field localization. However, this assumption is incompatible to legacy wireless communication systems with typically {\emph{half-wavelength inter-antenna spacing}}.
Second, existing angle/location estimation methods for near-field localization primarily focus on digital array systems, which, however, may not be applicable to hybrid antenna systems with reduced-dimensional received signals.\footnote{Although the beamspace MUSIC method was designed to effectively estimate path directions hybrid antenna systems \cite{guo2017millimeter}, it inevitably decreases the maximum number of estimable sources  $ \mathcal{O}(N_{\rm RF})$.} 
Third, many mixed-field localization methods typically differentiate far-field and near-field targets by detecting far-field targets' angles in the angle-domain MUSIC spectrum by letting $ r\to \infty$.
However, it will be shown later in Section~\ref{sec:range estimation} that this method may make false detection of near-field targets, when their ranges are relatively large (still smaller than Rayleigh distance), hence resulting in high localization errors. 
This issue becomes more severe when the half-wavelength inter-antenna spacing constraint is considered due to the appearance of ambiguous targets' angles.

To address the above issues, we propose a new mixed-field localization method as illustrated in Fig. \ref{fig:Algorithm}, which consists of three main steps with their main ideas summarized below and detailed procedures elaborated in  subsequent subsections.  

\begin{itemize}
	\item In Step 1, we accurately recover equivalent received signals at the UPA antennas from the reduced-dimensional signals $\mathbf{y}(m)$, by judiciously designing the analog combining matrices over time $\{\mathbf{W}(m)\}$. 
	In particular, the UPA with half-wavelength inter-antenna spacing is considered, which inevitably causes angular ambiguity and will be addressed in the third step.
	\item In Step 2, we design a modified low-dimensional MUSIC algorithm for UPA-based target localization by constructing a new \emph{Toeplitz-structured} (far-field equivalent) covariance matrix of a virtual S-UPA, which enables effective decoupling of the angle and range estimation.
	\item In Step 3, we propose an efficient target classification method based on distinct characteristics of 3D MUSIC spectrum in the range domain between far-field and near-field targets, which can also effectively address the angular ambiguity issue caused by the virtual S-UPA arsing from the half-wavelength inter-antenna spacing the physical UPA. 
\end{itemize}
\vspace{-10pt}
\subsection{Signal Reconstruction}
\label{sec:signal reconstruction}
Due to the low-dimensional signal matrix in the hybrid antenna systems (i.e., $\mathbf{y}(m)$), the number of targets that can be estimated is significantly reduced. To tackle this issue, we propose an efficient method to recover the equivalent signals received at the UPA and minimize the recovery error by carefully designing the analog combining matrices $\mathbf{W}(m), \forall m \in \mathcal{M}$.

\begin{figure*}[t]
	\begin{equation}
		\label{eq:collected signal}
		\mathbf{Y}_{\ell} \!\!=\!\! \left[\begin{array}{c}
			\mathbf{y}\big(U(\ell-1)+1\big) \\
			\mathbf{y}\big(U(\ell-1)+2\big) \\
			\vdots\\
			\mathbf{y}\big(U(\ell-1)+U\big)
		\end{array}\right]
		\!\!=\!\!
		\left[\begin{array}{c}
			\mathbf{W}\big(U(\ell-1)+1\big) \\
			\mathbf{W}\big(U(\ell-1)+2\big) \\
			\vdots\\
			\mathbf{W}\big(U(\ell-1)+U\big)
		\end{array}\right] \!\!\mathbf{G} \mathbf{s}_{\ell} \!\!+\!\! \left[\begin{array}{c}
			\mathbf{W}\big(U(\ell-1)+1\big) \mathbf{n}\big(U(\ell-1)+1\big) \\
			\mathbf{W}\big(U(\ell-1)+2\big) \mathbf{n}\big(U(\ell-1)+2\big) \\
			\vdots\\
			\mathbf{W}\big(U(\ell-1)+U\big)\mathbf{n}\big(U(\ell-1)+U\big)
		\end{array}\right], \forall \ell \in \mathcal{L}.
	\end{equation}
	\hrulefill
\end{figure*}
To this end, we divide the $ M $ received signals $\mathbf{y}(m)$ into $ L = M/U $ groups, for which the $ \ell $-th group's signals can be concatenated into a new effective signal vector $ \mathbf{Y}_\ell \in \mathbb{C}^{N \times 1} $ as given in \eqref{eq:collected signal}. As such, each group's received pilots can be utilized to recover a sample of  equivalent signals received at the UPA.\footnote{In \eqref{eq:collected signal}, for ease of design, the transmitted signals from targets are assumed to be identical in every $ M $ pilots, i.e., $ \mathbf{s}\big(M(\ell-1)+1\big) = \mathbf{s}\big(M(\ell-1)+2\big) =, \cdots, = \mathbf{s}\big(M(\ell-1)+M\big) \triangleq \mathbf{s}_{\ell} $.}
Moreover, for simplicity, we set $ \mathbf{W}(u) = \mathbf{W}\big(u+\ell U\big), \forall u \in \mathcal{U} \triangleq \{1,2,\cdots,U\}, \forall \ell \in \mathcal{L} $.
As such, if  {\small$ {\mathbf{W}_0} \triangleq [\mathbf{W}^T(1),\mathbf{W}^T(2),\cdots, \mathbf{W}^T(U)]^T $} is an invertible matrix, the $ \ell $-th sample of the reconstructed signals received at the UPA antennas can be recovered as \cite{trigka2018effective}
\begin{equation}
	\label{eq:inverse}
	\hat{\boldsymbol{\eta}}(\ell) = \mathbf{W}_0^{-1}\mathbf{Y}_{\ell} = \mathbf{G} \mathbf{s}_{\ell} + \mathbf{n}_{\ell},
\end{equation}
where the effective noise vector $ \mathbf{n}_{\ell} \in \mathbb{C}^{N \times 1} $ is given by
{\small\begin{equation}
	\label{eq:}
	\mathbf{n}_{\ell} =  \mathbf{W}_0^{-1}\left[\begin{array}{c}
	\mathbf{W}(1) \mathbf{n}\big(U(\ell-1)+1\big) \\
	\mathbf{W}(2) \mathbf{n}\big(U(\ell-1)+2\big) \\
	\vdots\\
	\mathbf{W}(U)\mathbf{n}\big(U(\ell-1)+U\big)
	\end{array}\right].
\end{equation}}
The corresponding signal recovery error for ${\boldsymbol{\eta}}(\ell)$ is  $$ g(\mathbf{W}_0)=|\hat{\boldsymbol{\eta}}(\ell) - \mathbf{G} \mathbf{s}_{\ell}|^2 = {\rm Tr}\big(\mathbf{n}_{\ell}\mathbf{n}_{\ell}^H\big). $$

Then, we formulate an optimization problem to minimize the signal recovery error as follows.
\begin{subequations}
	\label{Eq:optimization problem}
	\begin{align}
		({\bf P1}):~~& \min_{\mathbf{W}_0}~  {\rm Tr}\Big(\mathbf{n}_{\ell}\mathbf{n}_{\ell}^H\Big) \label{eq:object}\\
		&~{\text{s}}{\text{.t}}{\rm{. }} ~~~{\rm Rank}(\mathbf{W}_0) = N,\label{eq:constraint 1}\\
		&~~~~~~~~~|[\mathbf{v}_f(u)]_i| = 1, \forall u, i \in \mathcal{U}, \forall f \in \mathcal{F},\label{eq:constraint 2}
	\end{align}
\end{subequations}
where  constraint \eqref{eq:constraint 1} is enforced to guarantee  $\mathbf{W}_0$ being invertible, and the unit-modulus constraint \eqref{eq:constraint 2} accounts for phase shifters in the analog combining matrix $ \mathbf{W}_0$ and $\{\mathbf{v}_f(u)\}$ are related to $\mathbf{W}_0$ (see \eqref{Eq:W}).

Although Problem ({\bf P1}) is non-convex, the necessary conditions for its optimal solution is given as follows.
\begin{lemma}
	\label{lemma:optimal solution condition}
	\emph{The optimal solution to Problem ({\bf P1}), denoted by $ \mathbf{W}_0^{\ast}$, should satisfy the following two conditions:\\
	\indent 1) $ {\rm Rank}(\mathbf{W}_0^{\ast}) = N $,\\
	\indent 2) $ \sum\limits_{u=1}^U \mathbf{W}^{\ast}(u)^H\mathbf{W}^{\ast}(u) $ is a diagonal matrix.}	
\end{lemma}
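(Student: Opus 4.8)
The plan is to first turn the stochastic objective into a deterministic function of $\mathbf{W}_0$ by evaluating the mean recovery error $\mathbb{E}\{{\rm Tr}(\mathbf{n}_\ell\mathbf{n}_\ell^H)\}$, and then to recognize that, after exploiting the sub-connected hybrid structure, Problem~(\textbf{P1}) collapses to minimizing the trace of the inverse of $\mathbf{R}\triangleq\mathbf{W}_0^H\mathbf{W}_0=\sum_{u=1}^U\mathbf{W}(u)^H\mathbf{W}(u)$ over a feasible set on which the diagonal of $\mathbf{R}$ is \emph{fixed}. Concretely, since $\mathbf{n}_\ell=\mathbf{W}_0^{-1}\bar{\mathbf{n}}$ with $\bar{\mathbf{n}}$ the stacked vector of the $\mathbf{W}(u)\mathbf{n}(\cdot)$ terms, and the noise is i.i.d.\ $\mathcal{CN}(0,\sigma^2\mathbf{I})$ across the $U$ slots, the blocks of $\mathbb{E}\{\bar{\mathbf{n}}\bar{\mathbf{n}}^H\}$ are $\sigma^2\mathbf{W}(u)\mathbf{W}(u)^H$. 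The first key step is to show $\mathbf{W}(u)\mathbf{W}(u)^H=U\mathbf{I}_{N_{\rm RF}}$: in the sub-connected architecture the $N_{\rm RF}$ rows of $\mathbf{W}(u)$ have pairwise disjoint supports (each antenna feeds exactly one RF chain), so the off-diagonal entries vanish, while the unit-modulus constraint \eqref{eq:constraint 2} forces each diagonal entry to equal the number $U$ of phase shifters per chain.

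Hence $\mathbb{E}\{\bar{\mathbf{n}}\bar{\mathbf{n}}^H\}=\sigma^2 U\mathbf{I}_N$ and the objective reduces to $\sigma^2 U\,{\rm Tr}\big((\mathbf{W}_0^H\mathbf{W}_0)^{-1}\big)=\sigma^2 U\,{\rm Tr}(\mathbf{R}^{-1})$, which is finite precisely when $\mathbf{W}_0$ is invertible --- this is exactly condition~1) and merely restates the feasibility constraint \eqref{eq:constraint 1}. The same disjoint-support / unit-modulus argument applied column-wise shows every diagonal entry of $\mathbf{R}=\sum_{u}\mathbf{W}(u)^H\mathbf{W}(u)$ equals $U$, so ${\rm Tr}(\mathbf{R})=NU$ is constant over the entire feasible set; only the off-diagonal entries of $\mathbf{R}$ depend on the chosen phases.

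It then remains to minimize ${\rm Tr}(\mathbf{R}^{-1})$ over Hermitian positive-definite $\mathbf{R}$ whose diagonal entries are all fixed at $U$. I would invoke the Fischer/Schur-complement inequality $[\mathbf{R}^{-1}]_{ii}\ge 1/[\mathbf{R}]_{ii}$, valid for any positive-definite $\mathbf{R}$ with equality iff the off-diagonal entries in the $i$-th row and column of $\mathbf{R}$ vanish; summing over $i$ gives ${\rm Tr}(\mathbf{R}^{-1})\ge\sum_i 1/[\mathbf{R}]_{ii}=N/U$ with equality iff $\mathbf{R}$ is diagonal, i.e.\ $\mathbf{R}=U\mathbf{I}_N$. (Equivalently, applying the AM--HM inequality to the eigenvalues under the fixed trace ${\rm Tr}(\mathbf{R})=NU$ yields the same bound and the same equality condition.) Therefore any minimizer must render $\mathbf{R}=\mathbf{W}_0^H\mathbf{W}_0$ diagonal, which is condition~2). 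To confirm that this lower bound is attainable within the unit-modulus feasible set --- so that the two necessary conditions are non-vacuous and in fact characterize the optimum --- I would exhibit a DFT-type design of the $\{\mathbf{v}_f(u)\}$ realizing $\mathbf{R}=U\mathbf{I}_N$.

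The main obstacle is less the final matrix inequality than the reduction preceding it: one must (i) pass from the realized error to its expectation, (ii) verify through the permutation / block-diagonal structure of $\mathbf{W}(u)$ in \eqref{Eq:W} that both $\mathbf{W}(u)\mathbf{W}(u)^H$ and the diagonal of $\sum_u\mathbf{W}(u)^H\mathbf{W}(u)$ are pinned by the unit-modulus constraint, and (iii) recognize that with the diagonal frozen the task is exactly a trace-of-inverse minimization, so that the equality case of the Fischer inequality delivers the ``diagonal'' characterization. Some care is also needed to note that condition~2) is only \emph{necessary} unless the diagonal target $U\mathbf{I}_N$ is shown to lie in the feasible set; establishing that feasibility (the DFT construction) is the genuinely constructive part of the argument.
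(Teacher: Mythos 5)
Your proof is correct. Note that the paper states this lemma without any proof (it moves directly to Lemma~\ref{lemma:designed W}, which only \emph{verifies} that the DFT construction satisfies the two conditions), so there is no authorial argument to compare against; your derivation supplies exactly the missing justification, and it dovetails with what the paper does prove. Each step checks out: taking the expectation of the recovery error gives $\sigma^2\,{\rm Tr}\big(\mathbf{W}_0^{-1}\,\mathbb{E}\{\bar{\mathbf{n}}\bar{\mathbf{n}}^H\}\,(\mathbf{W}_0^{-1})^H\big)$; independence of the noise across the $U$ slots makes $\mathbb{E}\{\bar{\mathbf{n}}\bar{\mathbf{n}}^H\}$ block-diagonal with blocks $\sigma^2\mathbf{W}(u)\mathbf{W}(u)^H$; the disjoint row supports of the sub-connected $\mathbf{W}(u)$ together with the unit-modulus constraint \eqref{eq:constraint 2} give $\mathbf{W}(u)\mathbf{W}(u)^H=U\mathbf{I}_{N_{\rm RF}}$ and pin every diagonal entry of $\mathbf{R}=\sum_{u}\mathbf{W}(u)^H\mathbf{W}(u)$ at $U$; and the inequality $[\mathbf{R}^{-1}]_{ii}\ge 1/[\mathbf{R}]_{ii}$ (equality iff the off-diagonal entries of row and column $i$ vanish) then forces any minimizer of ${\rm Tr}(\mathbf{R}^{-1})$ with fixed diagonal to satisfy $\mathbf{R}=U\mathbf{I}_N$, which is condition~2), while condition~1) is just the feasibility constraint \eqref{eq:constraint 1} needed for the objective to be finite. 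You are also right that this establishes necessity only once attainability of the bound $N/U$ within the unit-modulus, block-structured feasible set is exhibited; the paper's Lemma~\ref{lemma:designed W} provides precisely that construction, so your argument closes without a gap.
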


Next, we present an optimal solution $ \mathbf{W} $, which satisfy the two conditions in Lemma \ref{lemma:optimal solution condition}.
\begin{lemma}
	\label{lemma:designed W}
	\emph{An optimal solution 
    to ({\bf P1}) can be constructed as $\mathbf{W}_0^*=[(\mathbf{W}^T(1))^*,\cdots,(\mathbf{W}^T(u))^*,\cdots, (\mathbf{W}^T(U)]^T)^*$, where $(\mathbf{W}^T(u))^*$ is given by
	\begin{equation}
		\label{eq:design of Wp} 
		(\mathbf{W}^T(u))^* = \boldsymbol{\Pi} \!\times\! \left[\begin{array}{cccc}
		\mathbf{f}_{u} & \mathbf{0} & \ldots & \mathbf{0} \\
		\mathbf{0} & \mathbf{f}_{u} & \ldots & \mathbf{0} \\
		\vdots & \vdots & \ddots & \vdots \\
		\mathbf{0} & \mathbf{0} & \ldots & \mathbf{f}_{u}
		\end{array}\right].
	\end{equation}
    Herein, $\mathbf{f}_u$ is the $ u $-th column of the DFT matrix $ \mathbf{F} = [\mathbf{f}_1, \cdots, \mathbf{f}_u, \cdots, \mathbf{f}_U] \in \mathbb{C}^{U \times U} $ with $\mathbf{f}_u$ given by
		\begin{equation}
			\mathbf{f}_u = \Big[1, e^{\jmath 2\pi \frac{u-1}{U}}, e^{\jmath 2\pi \frac{2(u-1)}{U}}, \cdots, e^{\jmath 2\pi \frac{(U-1)(u-1)}{U}}\Big]^H.
		\end{equation}	
	}
\end{lemma}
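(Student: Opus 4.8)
The plan is to recast the objective of ({\bf P1}) as an explicit function of the Gram matrix $\mathbf{M}\triangleq\mathbf{W}_0^H\mathbf{W}_0=\sum_{u=1}^U\mathbf{W}(u)^H\mathbf{W}(u)$, then show a trace inequality whose equality case is a scaled identity, and finally verify that the DFT construction in \eqref{eq:design of Wp} hits that equality case. First I would take the expected recovery error (MSE) over the noise. Since the $U$ stacked blocks use independent realizations $\mathbf{n}(U(\ell-1)+u)\sim\mathcal{CN}(\mathbf{0},\sigma^2\mathbf{I}_N)$, the stacked noise has block-diagonal covariance $\sigma^2\,\mathrm{blkdiag}\{\mathbf{W}(u)\mathbf{W}(u)^H\}_{u=1}^U$. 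Because each RF chain drives $U$ disjoint unit-modulus phase shifters, $\mathbf{W}(u)\mathbf{W}(u)^H=U\mathbf{I}_{N_{\rm RF}}$ for \emph{every} feasible $\mathbf{W}(u)$, so the covariance collapses to $\sigma^2U\mathbf{I}_N$ irrespective of the design. With $\hat{\boldsymbol\eta}(\ell)$ as in \eqref{eq:inverse}, this gives
\begin{equation}
\mathbb{E}\big[{\rm Tr}(\mathbf{n}_\ell\mathbf{n}_\ell^H)\big]=\sigma^2 U\,{\rm Tr}\big((\mathbf{W}_0^H\mathbf{W}_0)^{-1}\big)=\sigma^2 U\,{\rm Tr}(\mathbf{M}^{-1}),
\end{equation}
which is the quantity to be minimized.

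Next I would use the unit-modulus constraint \eqref{eq:constraint 2} to fix ${\rm Tr}(\mathbf{M})$. Since each antenna connects to exactly one RF chain, the $i$-th column of every $\mathbf{W}(u)$ carries a single unit-modulus entry, so $[\mathbf{M}]_{ii}=\sum_{u=1}^U 1=U$ and ${\rm Tr}(\mathbf{M})=NU$ is design-independent. For $\mathbf{M}\succ\mathbf{0}$ with fixed trace, the AM--HM inequality on the eigenvalues $\{\lambda_i\}$ yields ${\rm Tr}(\mathbf{M}^{-1})=\sum_i\lambda_i^{-1}\ge N^2/{\rm Tr}(\mathbf{M})=N/U$, with equality iff all $\lambda_i$ coincide, i.e. iff $\mathbf{M}=U\mathbf{I}_N$. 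This is fully consistent with Lemma \ref{lemma:optimal solution condition}: a diagonal $\mathbf{M}$ whose diagonal entries are all pinned to $U$ is necessarily $U\mathbf{I}_N$. Hence the optimal value of ({\bf P1}) is $N\sigma^2$, attained precisely by those feasible $\mathbf{W}_0$ satisfying $\mathbf{W}_0^H\mathbf{W}_0=U\mathbf{I}_N$.

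The remaining, and most delicate, step is to verify that the construction \eqref{eq:design of Wp} achieves $\mathbf{M}=U\mathbf{I}_N$. I would work in the permuted antenna ordering induced by $\boldsymbol{\Pi}$, in which each $\mathbf{W}(u)$ becomes block diagonal with $N_{\rm RF}$ identical blocks $\mathbf{f}_u$ acting on disjoint groups of $U$ antennas. Because these supports are disjoint, every cross-chain term in $\sum_u\mathbf{W}(u)^H\mathbf{W}(u)$ vanishes and the sum is block diagonal with the $f$-th block equal to $\sum_{u=1}^U\mathbf{f}_u\mathbf{f}_u^H$. Invoking the orthogonality of the DFT columns,
\begin{equation}
\sum_{u=1}^U\mathbf{f}_u\mathbf{f}_u^H=\mathbf{F}\mathbf{F}^H=U\mathbf{I}_U,
\end{equation}
so each block equals $U\mathbf{I}_U$, and undoing the (unitary) permutation gives $\mathbf{M}=U\mathbf{I}_N$. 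This simultaneously confirms both conditions of Lemma \ref{lemma:optimal solution condition} (rank $N$ and diagonal Gram) and attains the lower bound $N\sigma^2$, proving that \eqref{eq:design of Wp} is optimal for ({\bf P1}). I expect the bookkeeping of the permutation and block structure---namely showing that disjoint support annihilates the inter-chain coupling so that the problem decouples into $N_{\rm RF}$ independent DFT-orthogonality identities---to be the main obstacle, whereas the trace inequality and the noise-covariance reduction are routine.
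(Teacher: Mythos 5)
Your proposal is correct, and it is actually more complete than the paper's own proof. The paper's argument is purely a verification: it computes $\mathbf{W}_0^{H}\mathbf{W}_0=\sum_{u=1}^U\mathbf{W}^H(u)\mathbf{W}(u)=U\mathbf{I}_N$ using exactly the two ingredients you isolate (disjoint per-chain supports kill the cross terms; DFT column orthogonality gives $\sum_u\mathbf{f}_u\mathbf{f}_u^H=U\mathbf{I}_U$), and then simply observes that the two \emph{necessary} conditions of Lemma~\ref{lemma:optimal solution condition} are met. Strictly speaking that only shows the candidate is not excluded by the necessary conditions. You go further: by writing the expected recovery error as $\sigma^2U\,{\rm Tr}\big((\mathbf{W}_0^H\mathbf{W}_0)^{-1}\big)$ (using that $\mathbf{W}(u)\mathbf{W}(u)^H=U\mathbf{I}_{N_{\rm RF}}$ for \emph{every} feasible combiner, which holds because each row of $\mathbf{W}(u)$ has $U$ unit-modulus entries on disjoint supports), noting that the unit-modulus constraint pins ${\rm Tr}(\mathbf{W}_0^H\mathbf{W}_0)=NU$ independently of the design, and invoking AM--HM on the eigenvalues, you obtain a design-independent lower bound $N\sigma^2$ attained iff $\mathbf{W}_0^H\mathbf{W}_0=U\mathbf{I}_N$. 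This supplies the sufficiency (global optimality) argument that the paper leaves implicit, at the cost of a slightly longer derivation; the final verification that the DFT construction achieves the equality case coincides with the paper's computation. The only interpretive liberty you take is replacing the (random) objective ${\rm Tr}(\mathbf{n}_\ell\mathbf{n}_\ell^H)$ by its expectation over the noise, which is clearly the intended reading of ({\bf P1}).
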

\vspace{-0.2cm}
\begin{proof}
	Based on \eqref{eq:design of Wp}, $ \mathbf{W}_0^{H}\mathbf{W}_0 = \sum\limits_{u=1}^U \mathbf{W}^H(u)\mathbf{W}(u) = \boldsymbol{\Pi} \times \mathbf{F} \times\boldsymbol{\Pi}^H$.
Since {$\small\mathbf{F} = \text{diag}\Big\{{\sum\limits_{u=1}^U\mathbf{f}_{u}\mathbf{f}_{u}^H,\cdots,\sum\limits_{u=1}^U\mathbf{f}_{u}\mathbf{f}_{u}^H}\Big\}, $} and $ \sum\limits_{u=1}^U\mathbf{f}_{u}\mathbf{f}_{u}^H  = U\mathbf{I}_U$, we have $ \mathbf{W}_0^{H}\mathbf{W}_0 = U\mathbf{I}_N $, which indicates that $ {\mathbf{W}}_0/{\sqrt{U}} $ is an orthogonal matrix.
	As such, we have $ \rank(\mathbf{W}_0) = N $ and the designed combining matrix $ \mathbf{W}_0 $ satisfy the two conditions in Lemma \ref{lemma:optimal solution condition}, thus  completing the proof. 
\end{proof}

From Lemma \ref{lemma:designed W}, we can recover the equivalent signals received at the UPA without amplifying the noise power, which will be used for target localization in the sequel.   
\vspace{-8pt}
\subsection{Angle Estimation}
\label{sec:estimation for far-field users}
In this subsection, we develop a decoupled MUSIC algorithm tailored to the UPA architecture. 
Owing to the half-wavelength spacing constraint, it is difficult to directly construct an ambiguity-free equivalent far-field covariance matrix.
To address this issue, we first construct a far-field covariance matrix of a virtual sparse UPA (S-UPA) from the original covariance matrix, which enables the decoupling of the joint 3D angle-and-range estimation into separate angle and range estimations.
Based on the resulting angle-only covariance matrix of the virtual S-UPA, a decoupled 2D angular MUSIC algorithm is further devised to separately estimate the elevation and azimuth angles.
The sparse structure of the virtual S-UPA inevitably introduces angular ambiguities, which will be effectively resolved in Section~\ref{sec:range estimation}.

\underline{\textbf{Equivalent far-field covariance matrix:}}
To decouple the angle and range estimation, we first obtain the covariance matrix of the signals received at the UPA in \eqref{eq:inverse}, given by
\begin{equation}
\label{eq:orignal covariance matrix}
	\mathbf{R}_{\boldsymbol{\eta}} = \mathbb{E}\!\left\{\hat{\boldsymbol{\eta}}(\ell) (\hat{\boldsymbol{\eta}}(\ell))^{\mathrm{H}}\right\} = \mathbf{A}\mathbf{S}_{\rm F}\mathbf{A}^{{H}} + \mathbf{B}\mathbf{S}_{\rm N}\mathbf{B}^{{H}} + \sigma^2 \mathbf{I}_{N},
\end{equation}
where $ \mathbf{S}_{\rm F} = \mathbb{E}\left\{\mathbf{s}_{\rm F}(t) \mathbf{s}_{\rm F}^{\mathrm{H}}(t)\right\} = \text{diag}\{g_1,g_2,\cdots,g_{K_1}\} $ and $ \mathbf{S}_{\rm N} = \mathbb{E}\left\{\mathbf{s}_{\rm N}(t) \mathbf{s}_{\rm N}^{\mathrm{H}}(t)\right\} = \text{diag}\{g_{K_1 + 1},g_{K_1+2},\cdots,g_{K}\} $ represent the source covariance matrices for the far- and near-field targets.
In practice, the theoretical $ \mathbf{R} $ is estimated by the sample average covariance matrix $ \hat{\mathbf{R}} $, given by
\begin{equation}
	\hat{\mathbf{R}}_{\boldsymbol{\eta}}=\frac{1}{L} \sum_{\ell=1}^L \hat{\boldsymbol{\eta}}(\ell) \hat{\boldsymbol{\eta}}(\ell)^{{H}},
\end{equation}
where $ \ell \in \mathcal{L} $ denotes the effective received pilot index with $ L $ denoting the number of pilots.
Based on \eqref{eq:received signal}, the $ (i,j) $-th entry of the covariance matrix $\mathbf{R}_{\boldsymbol{\eta}} $ is given by 
\begin{equation}
\begin{aligned}
\!\!\!\!\!\![\mathbf{R}_{\boldsymbol{\eta}}]_{i, j} &= \sum_{k=1}^K \! g_k \exp\left({-\jmath\frac{2\pi}{\lambda}\Big(u_i(\alpha_{k},\beta_{k})-u_j(\alpha_{k},\beta_{k})}\right)\\
&~~~{+v_i(r_{k},\alpha_{k},\beta_{k})-v_j(r_{k},\alpha_{k},\beta_{k})\Big)} \!\!+\!\! \sigma^2 \delta_{i, j},
\end{aligned}	
\end{equation}
where $ \delta_{i, j}=1 $  when $i=j$, and $ \delta_{i, j}=0$  otherwise. 
It is observed that the anti-diagonal entries of $ \mathbf{R} $ only involve the angle information of both far- and near-field targets, which is given by
\begin{equation}
\label{eq:anti-diagonal}
\begin{aligned}
[\mathbf{R}_{\boldsymbol{\eta}}]_{n, N+1-n }\!=\!\! \sum_{k=1}^K \! g_k e^{j \frac{2\pi}{\lambda}2d_0(n_x\alpha_k + n_y\beta_k)} \!\!+\!\! \sigma^2 \delta_{n, N \!+\!1\!-\! n}, n\in \mathcal{N},
\end{aligned}
\end{equation}
which endows the possibility to estimate the angles of targets.

To this end, we  reorganize \eqref{eq:anti-diagonal} into a new vector $ {\tilde{\boldsymbol{\eta}}} \in \mathbb{C}^{N \times 1} $ without noise taken into account, which is given by
\begin{equation}
	\label{eq:anti}
	[{\tilde{\boldsymbol{\eta}}}]_n = \sum_{k=1}^K g_k e^{j \frac{2\pi}{\lambda}2d_0(n_x\alpha_k + n_y\beta_k)}, \forall n \in \mathcal{N}.
\end{equation}
Then, we shall show that the entries of ${\tilde{\boldsymbol{\eta}}}$ can be used to construct a new matrix, which is equivalent to the covariance matrix of a virtual S-UPA, hence allowing for applying  
subspace-based angle estimation methods. 

Specifically, we consider an S-UPA with
$\widetilde{N} \!\!=\! \widetilde{N}_x\widetilde{N}_y $ antennas where  $ \widetilde{N}_x \!=\! \frac{N_x+1}{2} $ and $ \widetilde{N}_y \!=\! \frac{N_y+1}{2} $, and the inter-antenna spacing is  $\lambda$. Then, 
 the received signas at the virtual S-UPA is given by $$ 	{\tilde{\boldsymbol{\eta}}}_{\rm SUPA}(\ell) = \tilde{\mathbf{V}}_{\rm SUPA} \mathbf{s}(\ell) + \mathbf{n}_{\rm SUPA}(\ell),$$ where $ \widetilde{\mathbf{V}}_{\rm SUPA} = [\tilde{\mathbf{a}}_{\rm SUPA}\left(\theta_{1},\phi_1\right),\cdots,\tilde{\mathbf{a}}_{\rm SULA}\left(\theta_{K},\phi_K\right)] $. Herein, we define $\tilde{\mathbf{a}}_{\rm SULA}\left(\theta_{k},\phi_K\right) =   \tilde{\mathbf{a}}_y(\alpha_{{k}})\otimes \tilde{\mathbf{a}}_x(\beta_{{k}})$, 
where $$ \tilde{\mathbf{a}}_x^{H}(\alpha_{{k}}) = [e^{j\frac{4\pi}{\lambda}(-\frac{\widetilde{N}_x-1}{2})\alpha_k},\cdots, e^{j\frac{4\pi}{\lambda}\frac{\widetilde{N}_x-1}{2}\alpha_k}] \in \mathbb{C}^{\widetilde{N}_x\times1},$$   $$ \tilde{\mathbf{a}}_y^{H}(\beta_{{k}})=[e^{j\frac{4\pi}{\lambda}(-\frac{\widetilde{N}_y-1}{2})\beta_k},\cdots, e^{j\frac{4\pi}{\lambda}\frac{\widetilde{N}_y-1}{2}\beta_k}] \in \mathbb{C}^{\tilde{N_y}\times1}$$ represent the channel response vectors of the equivalent S-UPA in the $ x $-axis and $ y $-axis, respectively.
Hence, the covariance matrix of this S-UPA can be obtained as $ \widetilde{\mathbf{R}}_{{\tilde{\boldsymbol{\eta}}}} = \mathbb{E}\!\left\{{\tilde{\boldsymbol{\eta}}}_{\rm SULA}(\ell) ({\tilde{\boldsymbol{\eta}}}_{\rm SULA}(\ell))^{{H}}\right\} $, given by
{\small
	\begin{equation}
	\widetilde{\mathbf{R}}_{{\tilde{\boldsymbol{\eta}}}} =\left[\begin{array}{cccccc}
	\widetilde{\mathbf{R}}_{1,1} & \widetilde{\mathbf{R}}_{1,2} & \cdots & \widetilde{\mathbf{R}}_{1,\widetilde{N}_y} \\
	\widetilde{\mathbf{R}}_{2,1} & \widetilde{\mathbf{R}}_{2,2} & \cdots & \widetilde{\mathbf{R}}_{2,\widetilde{N}_y} \\
	\vdots & \vdots & \cdots & \vdots \\
	\widetilde{\mathbf{R}}_{\widetilde{N}_y,1} & \widetilde{\mathbf{R}}_{\widetilde{N}_y,2} & \cdots & \widetilde{\mathbf{R}}_{\widetilde{N}_y,\widetilde{N}_y}
	\end{array}\right],
	\end{equation}}where $ \widetilde{\mathbf{R}}_{i,j}, \forall i,j \in \widetilde{\mathcal{N}}_y \triangleq \{1,2,\cdots, \widetilde{N}_y\} $ is a Toeplitz matrix. Specifically, the ($i_x$, $j_x$)-th entry in $\widetilde{\mathbf{R}}_{i,j}$ is given by
    \begin{equation}
    \label{eq:R_ij}
        [\widetilde{\mathbf{R}}_{i,j}]_{i_x,j_x} = \sum\limits_{k=1}^{K}e^{\jmath\frac{4\pi}{\lambda}(i-j)\beta_k}e^{\jmath\frac{4\pi}{\lambda}(i_x-j_x)\alpha_k},
    \end{equation}
    where $ i_x,j_x \in \widetilde{\mathcal{N}}_x \triangleq \{1,2,\cdots, \widetilde{N}_x\}$.
\begin{figure}[t]	
	\vspace{-14pt}
	\centering
	\subfloat[Parameter $ \beta $ estimation]{
		\includegraphics[width=0.25\textwidth]{./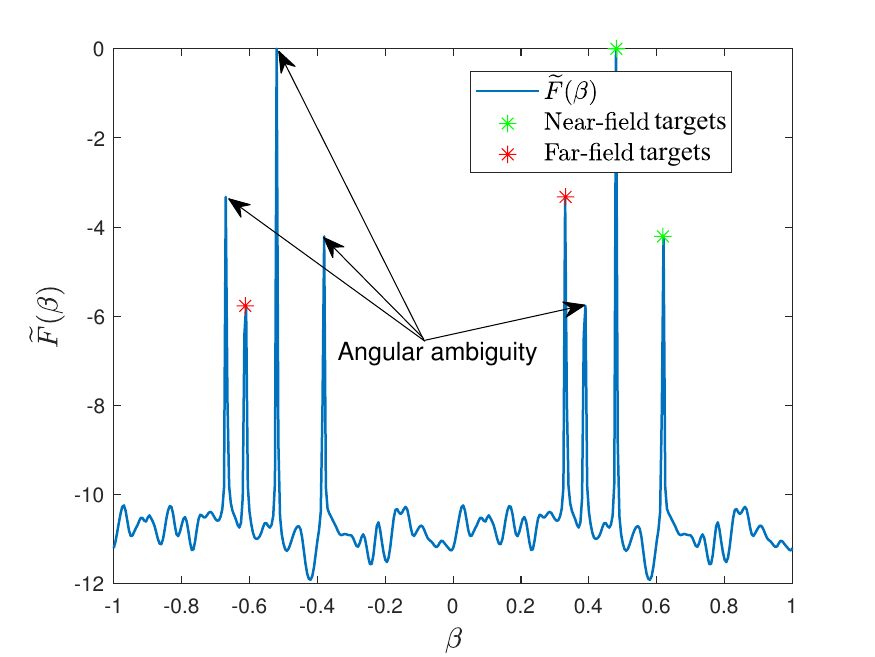}}
	\subfloat[Parameter $ \alpha $ estimation]{
		\includegraphics[width=0.25\textwidth]{./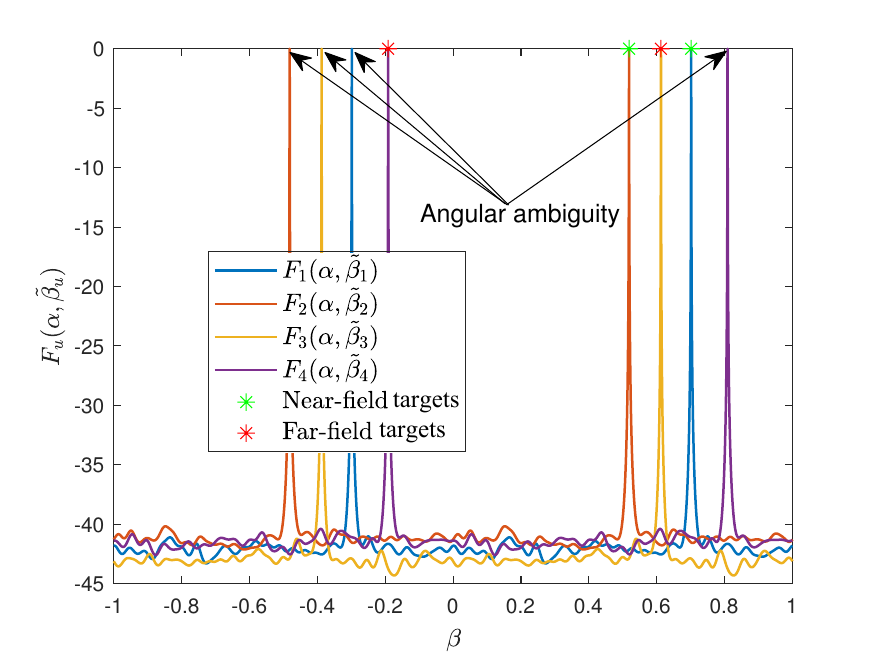}}
	\caption{Illustration of angle estimation. System parameters are set as $f = 10$ GHz and $N_x = N_y = 61$. The elevation and azimuth angles for far-field targets are ($\frac{\pi}{3}$, $-\frac{\pi}{4}$, $1000$ m) and ($\frac{\pi}{8}$, $\frac{2\pi}{3}$, $1500$ m), while the locations for near-field targets are set as ($\frac{5\pi}{13}$, ${0.23\pi}$, $50$ m) and ($\frac{\pi}{4}$, $\frac{5\pi}{21}$, $60$ m).}
	\label{fig:near-field UAV}
	\vspace{-8pt}
\end{figure}

By comparing \eqref{eq:anti} and  $\widetilde{\mathbf{R}}_{i,j} $ in \eqref{eq:R_ij}, we can readily obtain the following result.
\begin{lemma}\label{Lem:Gene}
	\emph{The entries of ${\tilde{\boldsymbol{\eta}}}$ in  \eqref{eq:anti} can be used to construct the matrix $\widetilde{\mathbf{R}}$. Mathematically, each entry of $[\widetilde{\mathbf{R}}]_{i,j}$ is given  by}
	\begin{equation}
	\begin{aligned}
		\label{eq:toeplitz}
		\widetilde{\mathbf{R}}_{i,j} &\!=\! {\rm{toeplitz}}\big([\tilde{\mathbf{y}}]_{\frac{N+1}{2}-(i-1)N_x:\frac{N+2-N_x}{2}\!-\!(i-1)N_x},\big.\\
		 &~~~~~~~~~~\big.[\tilde{\mathbf{y}}]_{\frac{N+1}{2}\!+\!(j-1)N_x 			\!:\!\frac{N+N_x}{2}\!+\!(j-1)N_x}\big), i,j \in \widetilde{\mathcal{N}}_y.
	\end{aligned}
	\end{equation}
\end{lemma}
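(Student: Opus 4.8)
The plan is to prove the lemma by a direct entrywise identification, reducing everything to the single index bijection $n \leftrightarrow (n_x,n_y)$ defined by $n = n_x + n_y N_x + \frac{N+1}{2}$. First I would substitute the half-wavelength spacing $d_0 = \lambda/2$ into \eqref{eq:anti}, which collapses the prefactor $\frac{2\pi}{\lambda}2d_0$ to $2\pi$ and writes $[\tilde{\boldsymbol{\eta}}]_n = \sum_{k=1}^K g_k e^{\jmath 2\pi(n_x\alpha_k + n_y\beta_k)}$, i.e. a two-dimensional sum of complex exponentials sampled on the integer grid $(n_x,n_y)\in\mathcal{N}_x\times\mathcal{N}_y$. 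In parallel, I would observe that the virtual S-UPA covariance is, by construction, $\widetilde{\mathbf{R}} = \widetilde{\mathbf{V}}_{\rm SUPA}\,\mathbf{S}\,\widetilde{\mathbf{V}}_{\rm SUPA}^{H}$ with $\mathbf{S}=\diag\{g_1,\dots,g_K\}$, so that its generic entry in \eqref{eq:R_ij} depends on the element indices only through the differences $(i-j,\,i_x-j_x)$. Hence $\widetilde{\mathbf{R}}$ is block-Toeplitz with Toeplitz blocks, and the whole claim reduces to showing that the slicing rule \eqref{eq:toeplitz} reads off exactly the right entry of $\tilde{\boldsymbol{\eta}}$ (written $\tilde{\mathbf{y}}$ there) for each such difference.

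The core step is the matching. Since the diagonal-$(i_x-j_x)$ entry of block $(i,j)$ must equal $\sum_{k} g_k e^{\jmath 2\pi((i_x-j_x)\alpha_k+(i-j)\beta_k)}$, I would set $(n_x,n_y) = (i_x-j_x,\,i-j)$ and read off the scalar $[\tilde{\boldsymbol{\eta}}]_n$ at $n = (i_x-j_x)+(i-j)N_x+\frac{N+1}{2}$; the explicit form of $[\tilde{\boldsymbol{\eta}}]_n$ from the first paragraph then makes the two expressions coincide term by term. It remains to check that the index intervals in \eqref{eq:toeplitz} enumerate precisely these entries as $i_x,j_x$ sweep $\widetilde{\mathcal{N}}_x$: using the center identification $\frac{N+1}{2}\leftrightarrow(0,0)$, the first-column slice fixes $n_y$ and lets $n_x$ run over the lower-triangular diagonals, the first-row slice fixes $n_y$ and lets $n_x$ run over the upper-triangular diagonals, and the shared corner entry enforces Toeplitz consistency on the main diagonal. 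The Hermitian symmetry of $\widetilde{\mathbf{R}}$ then covers the blocks with $i<j$ by conjugation.

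The hard part will be the index bookkeeping rather than any analytic estimate. I would need to verify that the map $n\mapsto(n_x,n_y)$ never produces a ``carry'' across block boundaries, i.e. that the $n_x$ induced by each slice endpoint always lies in $\{-\frac{N_x-1}{2},\dots,\frac{N_x-1}{2}\}$, so that the decomposition $n=n_x+n_yN_x+\frac{N+1}{2}$ is unique. This is exactly where the choice $\widetilde{N}_x=\frac{N_x+1}{2}$ enters: the largest lateral offset satisfies $|i_x-j_x|\le\widetilde{N}_x-1=\frac{N_x-1}{2}$, which saturates but never exceeds the admissible range of $n_x$, so no aliasing between neighbouring block-rows occurs, and the analogous bound with $\widetilde{N}_y=\frac{N_y+1}{2}$ controls the block index $i-j$. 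Once these range checks are in place, I would finish by substituting the four slice endpoints of \eqref{eq:toeplitz}, confirming that the induced pairs $(n_x,n_y)$ agree with $(i_x-j_x,\,i-j)$, and comparing against \eqref{eq:R_ij} to conclude the identity.
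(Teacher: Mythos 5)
Your proposal is correct and takes essentially the same approach as the paper, which offers no explicit proof beyond asserting that the identity follows "by comparing \eqref{eq:anti} and \eqref{eq:R_ij}"; your entrywise identification via the bijection $n = n_x + n_y N_x + \tfrac{N+1}{2}$, the reduction of \eqref{eq:R_ij} to a function of the differences $(i_x-j_x,\,i-j)$, and the range check $|i_x-j_x|\le \widetilde{N}_x-1=\tfrac{N_x-1}{2}$ (and likewise for $i-j$) is exactly that comparison made precise. The only caveat is that a literal substitution of the slice endpoints in \eqref{eq:toeplitz} exposes sign/offset inconsistencies in the paper's stated formula (e.g., the first-column slice fixes $n_y=-(i-1)$ rather than $i-j$), but that is a defect of the statement as printed rather than a gap in your argument.
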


Given Lemma~\ref{Lem:Gene}, the 2D-angular MUSIC algorithm can be utilized to estimate targets' angles by leveraging eigenvectors of the covariance matrix $ \widetilde{\mathbf{R}}_{{\tilde{\boldsymbol{\eta}}}} $, given by
\begin{equation}
	\label{eq:near-field EVD}
	\widetilde{\mathbf{R}}_{{\tilde{\boldsymbol{\eta}}}} = \widetilde{\mathbf{U}}_s \widetilde{\boldsymbol{\Sigma}}_s \widetilde{\mathbf{U}}_s^{{H}}+\widetilde{\mathbf{U}}_n \widetilde{\boldsymbol{\Sigma}}_n \widetilde{\mathbf{U}}_n^{{H}}, 
\end{equation}
where $ \widetilde{\mathbf{U}}_s \in \mathbb{C}^{\widetilde{N}\times K}$ and $ \widetilde{\mathbf{U}}_n \in \mathbb{C}^{\widetilde{N}\times \widetilde{N}-K}$ denote the signal and noise subspaces, respectively.
In addition, $ \widetilde{\boldsymbol{\Sigma}}_s $ and $ \widetilde{\boldsymbol{\Sigma}}_n $ are diagonal matrices, representing the corresponding eigenvalues of the signal and noise subspaces, respectively.
To be specific, based on the 2D MUSIC principle, the azimuth and elevation angles of both far-field and near-field targets can be estimated by searching peaks in the 2D-angular MUSIC spectrum \cite{he2011efficient}
\begin{equation}
	\label{eq:MUSIC for far-field}
	F(\alpha, \beta) = \frac{1}{\tilde{\mathbf{a}}^{H}_{\rm SUPA}(\alpha, \beta){\widetilde{\mathbf{U}}}_n {\widetilde{\mathbf{U}}}_n^{{H}} \tilde{\mathbf{a}}^{H}_{\rm SUPA}(\alpha, \beta)}.
\end{equation} 

However, it is worth noting that this method introduces two issues in target angle estimation. First, there exist angular ambiguity in the angle estimation in \eqref{eq:MUSIC for far-field}, due to the sparsity of the virtual S-UPA, i.e., $\tilde{\mathbf{a}}^{H}_{\rm SUPA}(\alpha, \beta) = \tilde{\mathbf{a}}^{H}_{\rm SUPA}(\alpha \pm 1, \beta \pm 1) $.
Second, performing a grid search in the 2D-angular domain incurs prohibitively high computational complexity. These issues will be addressed in the sequel.

\underline{\textbf{Decoupled 2D-angular MUSIC algorithm:}}
To reduce  the computational complexity in the 2D-angular domain search for $\{\alpha, \beta\}$, we propose a decoupled 2D-angular MUSIC algorithm, where the parameters $\beta$ and $\alpha$ are sequentially estimated by a 1D grid search. 

To this end, we first rewrite the denominator of $F(\alpha,\beta) $ in $ \eqref{eq:MUSIC for far-field} $ as
\begin{align}
	f(\alpha, \beta) &= \tilde{\mathbf{a}}^{H}_{\rm SUPA}(\alpha, \beta){\widetilde{\mathbf{U}}}_n {\widetilde{\mathbf{U}}}_n^H \tilde{\mathbf{a}}_{\rm SUPA}(\alpha, \beta)\notag\\
	&=\big(\tilde{\mathbf{a}}_y(\beta)\otimes\tilde{\mathbf{a}}_x(\alpha)\big)^H{\widetilde{\mathbf{U}}}_n {\widetilde{\mathbf{U}}}_n^H 	\big(\tilde{\mathbf{a}}_y(\beta)\otimes\tilde{\mathbf{a}}_x(\alpha)\big)\notag\\
	&=\tilde{\mathbf{a}}_x^H(\alpha)\widetilde{\mathbf{T}}(\beta)\tilde{\mathbf{a}}_x(\alpha),
\end{align}
where $ \widetilde{\mathbf{T}}(\beta) = \big(\tilde{\mathbf{a}}_y(\beta)\otimes\mathbf{I}_{N_x}\big)^H{\widetilde{\mathbf{U}}}_n{{\widetilde{\mathbf{U}}}}_n^H \big(\tilde{\mathbf{a}}_y(\beta)\otimes\mathbf{I}_{N_x}\big)$.
Then, the estimated angles can be obtained by solving the following problem by leveraging the MUSIC principle
\begin{equation}
\begin{array}{ll}
(\textbf{\text{P2}})&\min\limits_{\alpha, \beta}~~  \tilde{\mathbf{a}}_x^H(\alpha)\widetilde{\mathbf{T}}(\beta)\tilde{\mathbf{a}}_x(\alpha) \\
&~\text {s.t.} ~~~\tilde{\mathbf{c}}_{{\widetilde{N}_x}}^{{H}} \tilde{\mathbf{a}}_x(\alpha) = 1,
\end{array}
\end{equation}
where $ \tilde{\mathbf{c}}_{{\widetilde{N}_x}}^T = [\underbrace{0,\cdots,0,}_{\frac{\widetilde{N}_x-1}{2}}1,\underbrace{0,\cdots,0}_{\frac{\widetilde{N}_x-1}{2}}] $ and the constraint of $ (\textbf{\text{P2}}) $ is to avoid the trivial solution $ \tilde{\mathbf{a}}_x(\alpha) = \mathbf{0} $.
Although it is not easy to obtain the optimal solution to Problem (\textbf{\text{P2}}) in closed form, we present its poperies in the lemma below.
\begin{lemma}
	\label{lemma:solution to P2}
	\emph{The optimal solution to (\textbf{\text{P2}}) should satisfy }
	\begin{equation}
	\label{eq:optimal beta}
	\beta^\ast = \arg\max_{\beta}\tilde{\mathbf{c}}_{{\widetilde{N}_x}}^{{H}}\widetilde{\mathbf{T}}(\beta)^{-1}\tilde{\mathbf{c}}_{{\widetilde{N}_x}},
	\end{equation}
    \begin{equation}
    \label{eq:phase for alpha}
        \tilde{\mathbf{a}}_x(\alpha^\ast) = \frac{\widetilde{\mathbf{T}}(\beta^\ast)^{-1} \tilde{\mathbf{c}}_{{\widetilde{N}_x}}}{\tilde{\mathbf{c}}_{{\widetilde{N}_x}}^{{H}} \widetilde{\mathbf{T}}(\beta^\ast)^{-1}\tilde{\mathbf{c}}_{{\widetilde{N}_x}}}.
    \end{equation}
\end{lemma}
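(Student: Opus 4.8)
The plan is to recognize Problem $(\textbf{\text{P2}})$ as a jointly-parameterized minimum-variance (Capon/MVDR-type) problem and to solve it by a two-stage reduction: first minimize over $\tilde{\mathbf{a}}_x(\alpha)$ for a fixed $\beta$, which admits a closed-form solution, and then minimize the resulting scalar over $\beta$. Throughout I would treat $\tilde{\mathbf{a}}_x(\alpha)$ as a generic vector $\mathbf{x}\in\mathbb{C}^{\widetilde{N}_x}$ subject only to the single linear equality $\tilde{\mathbf{c}}_{\widetilde{N}_x}^{H}\mathbf{x}=1$; this relaxation is harmless because the constraint merely fixes the central (zero-index) entry of the steering vector to unity, which every admissible $\tilde{\mathbf{a}}_x(\alpha)$ already satisfies.

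First, for fixed $\beta$ I would note that $\widetilde{\mathbf{T}}(\beta)=(\tilde{\mathbf{a}}_y(\beta)\otimes\mathbf{I}_{N_x})^{H}\widetilde{\mathbf{U}}_n\widetilde{\mathbf{U}}_n^{H}(\tilde{\mathbf{a}}_y(\beta)\otimes\mathbf{I}_{N_x})$ is Hermitian positive semidefinite by construction, and generically positive definite (hence invertible) when the number of targets $K$ is small relative to $\widetilde{N}_x$; I would assume this invertibility. Then the inner problem $\min_{\mathbf{x}}\mathbf{x}^{H}\widetilde{\mathbf{T}}(\beta)\mathbf{x}$ subject to $\tilde{\mathbf{c}}_{\widetilde{N}_x}^{H}\mathbf{x}=1$ is a convex quadratic program with one linear constraint. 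Forming the Lagrangian and setting the Wirtinger gradient to zero yields the stationarity condition $\widetilde{\mathbf{T}}(\beta)\mathbf{x}=\mu\,\tilde{\mathbf{c}}_{\widetilde{N}_x}$, so $\mathbf{x}=\mu\,\widetilde{\mathbf{T}}(\beta)^{-1}\tilde{\mathbf{c}}_{\widetilde{N}_x}$; imposing the constraint fixes the multiplier as $\mu=1/(\tilde{\mathbf{c}}_{\widetilde{N}_x}^{H}\widetilde{\mathbf{T}}(\beta)^{-1}\tilde{\mathbf{c}}_{\widetilde{N}_x})$. This reproduces exactly the claimed form for $\tilde{\mathbf{a}}_x(\alpha^{\ast})$ in \eqref{eq:phase for alpha}.

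Substituting the minimizer back, the optimal inner objective collapses to $\mathbf{x}^{H}\widetilde{\mathbf{T}}(\beta)\mathbf{x}=1/(\tilde{\mathbf{c}}_{\widetilde{N}_x}^{H}\widetilde{\mathbf{T}}(\beta)^{-1}\tilde{\mathbf{c}}_{\widetilde{N}_x})$, a scalar depending on $\beta$ alone. Since $(\textbf{\text{P2}})$ is a minimization, I would then minimize this scalar over $\beta$; as it is the reciprocal of a strictly positive quantity, minimizing it is equivalent to maximizing the denominator, which gives $\beta^{\ast}=\arg\max_{\beta}\tilde{\mathbf{c}}_{\widetilde{N}_x}^{H}\widetilde{\mathbf{T}}(\beta)^{-1}\tilde{\mathbf{c}}_{\widetilde{N}_x}$, i.e.\ \eqref{eq:optimal beta}. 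This establishes both conditions.

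I expect the main obstacle to be the justification of the relaxation and of the two-stage order of optimization, rather than the calculus. Specifically, I must argue that replacing the structured steering vector $\tilde{\mathbf{a}}_x(\alpha)$ by a free vector does not change the optimizer in the MUSIC-spectrum sense, and that $\widetilde{\mathbf{T}}(\beta)$ remains invertible on the region of interest so that $\widetilde{\mathbf{T}}(\beta)^{-1}$ is well-defined. The degeneracies occur precisely at the true elevation directions where $\widetilde{\mathbf{T}}(\beta)$ approaches singularity, which is in fact the source of the sought peaks in \eqref{eq:optimal beta}, so some care with the limiting behaviour at those points is warranted.
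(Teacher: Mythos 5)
Your proposal is correct and follows essentially the same route as the paper's own proof: a Lagrangian/stationarity argument for the inner minimization over $\tilde{\mathbf{a}}_x(\alpha)$ treated as a free vector under the single linear constraint, back-substitution to obtain $f(\alpha^\ast,\beta)=1/\bigl(\tilde{\mathbf{c}}_{\widetilde{N}_x}^{H}\widetilde{\mathbf{T}}(\beta)^{-1}\tilde{\mathbf{c}}_{\widetilde{N}_x}\bigr)$, and then maximizing the denominator over $\beta$. Your added remarks on the harmlessness of the relaxation and on the invertibility of $\widetilde{\mathbf{T}}(\beta)$ are sensible caveats that the paper leaves implicit, but they do not change the argument.
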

\begin{proof}
	We first construct a Lagrange function $ L(\alpha, \beta) $ for $ (\textbf{\text{P2}}) $, which is given by
	\begin{equation}
		L(\alpha, \beta)= \tilde{\mathbf{a}}_x(\alpha)^{H} \widetilde{\mathbf{T}}(\beta) \tilde{\mathbf{a}}_x(\alpha) - \lambda\left(\tilde{\mathbf{c}}_{{\widetilde{N}_x}}^{{H}} \tilde{\mathbf{a}}_x(\alpha)-1\right),
	\end{equation}
	where $ \lambda $ denotes the Lagrange multiplier. 
	The partial derivative of $ L(\alpha, \beta) $ with respect to $ \tilde{\mathbf{a}}_x(\alpha) $ is $ \frac{\partial L(\alpha, \beta)}{\partial \tilde{\mathbf{a}}_x(\alpha)} = 2 \widetilde{\mathbf{T}}(\beta)\mathbf{a}_x(\alpha) + \lambda \mathbf{c}_{{N_x}} $. To obtain the minimal value, we set $ \frac{\partial L(\alpha, \beta)}{\partial \tilde{\mathbf{a}}_x(\alpha)} = 0 $, which can be simplified as $ \tilde{\mathbf{a}}_x(\alpha^\ast) = \lambda\widetilde{\mathbf{T}}(\beta)^{-1}\tilde{\mathbf{c}}_{{\widetilde{N}_x}} $. Considering that $ \tilde{\mathbf{c}}_{{\widetilde{N}_x}}^{{H}} \tilde{\mathbf{a}}_x(\alpha^\ast) = 1 $, $ \tilde{\mathbf{a}}_x(\alpha^\ast) $ can be rewritten as
    $\tilde{\mathbf{a}}_x(\alpha^\ast) = \frac{\widetilde{\mathbf{T}}(\beta)^{-1} \tilde{\mathbf{c}}_{{\widetilde{N}_x}}}{\tilde{\mathbf{c}}_{{\widetilde{N}_x}}^{{H}} \widetilde{\mathbf{T}}(\beta)^{-1}\tilde{\mathbf{c}}_{{\widetilde{N}_x}}}$.
    By substituting $\tilde{\mathbf{a}}_x(\alpha^\ast)$ into $ f(\alpha, \beta) $, we obtain $ f(\alpha^\ast, \beta) = \frac{1}{\tilde{\mathbf{c}}_{{\widetilde{N}_x}}^{{H}}\widetilde{\mathbf{T}}(\beta)^{-1}\tilde{\mathbf{c}}_{{\widetilde{N}_x}}} $.
    Hence, the optimal solution $\beta^\ast$ can be expressed as $\beta^\ast = \arg\max\limits_{\beta}\tilde{\mathbf{c}}_{{\widetilde{N}_x}}^{{H}}\widetilde{\mathbf{T}}(\beta)^{-1}\tilde{\mathbf{c}}_{{\widetilde{N}_x}}$, thus completing the proof.
\end{proof}

Lemma~\ref{lemma:solution to P2} shows that for the joint estimation of $\{\alpha, \beta\}$, we can first estimate $\beta$ according to \eqref{eq:optimal beta}, followed by the estimation for $\alpha$.
\subsubsection{Estimation of angle parameter $\beta$} 
First, to estimate $\beta$, it can be easily shown from \eqref{eq:optimal beta} that $\beta$ can be estimated by searching spectrum peaks in the following spectrum function
\begin{equation}
\label{eq:far-field beta}
    \widetilde{F}(\beta) = \tilde{\mathbf{c}}_{{\widetilde{N}_x}}^{{H}}\widetilde{\mathbf{T}}(\beta)^{-1}\tilde{\mathbf{c}}_{{\widetilde{N}_x}}.
\end{equation}
For target $k\in \mathcal{K}$, let $\hat{\beta}_k$ denote its estimated $\beta$.
However, since $ \tilde{\mathbf{a}}_y(\beta) = \tilde{\mathbf{a}}_y(\beta \pm 1) $ and hence $ \widetilde{\mathbf{T}}(\beta) = \widetilde{\mathbf{T}}(\beta\pm 1) $, we have $\widetilde{F}(\hat{\beta}_k) = \widetilde{F}(\hat{\beta}_k \pm 1) $.
Therefore, for each estimated true parameter $\hat{\beta}_k$, there always exists one ambiguous parameter $\hat{\beta}_k^{\rm (am)}$, which is given by
\begin{equation}
    \label{eq:angular ambiguity}
    \hat{\beta}_k^{(\rm am)} =\left\{\begin{matrix}
    \hat{\beta}_k + 1,~~\hat{\beta}_k = \sin\theta_k\cos\phi_k < 0,\\ 
    \hat{\beta}_k - 1,~~\hat{\beta}_k =  \sin\theta_k\cos\phi_k > 0.
    \end{matrix}\right.
\end{equation}

As such, for $K$ targets, a total of  $2K$ peaks can be found in the spectrum function $\widetilde{F}(\beta)$ including $K$ true angles and $K$ corresponding ambiguous angles, as illustrated in Fig.~\ref{fig:near-field UAV}(a).
For convenience, the estimated parameters are collectively denoted as $ \tilde{\beta}_u \in \{\hat{\beta}_k^{(\rm am)}, \hat{\beta}_k\}, u=1,2,\cdots,2K $. 
\subsubsection{Estimation for angle parameter $\alpha$}
Next, for parameter $\alpha$, instead of estimating it based on \eqref{eq:phase for alpha}, we propose to perform a 1D peak search on the spectrum function $F(\alpha,\beta)$ in \eqref{eq:MUSIC for far-field}, which is generally more accurate owing to the super-resolution of subspace-based methods. Specifically, for each estimated $\tilde{\beta}_{u}$,  $\alpha$ is found by a 1D peak search for the original spectrum function in \eqref{eq:MUSIC for far-field} 
{\small
	\begin{equation}
	\label{eq:far-field alpha estimation}
		F_{u}(\alpha,\!\tilde{\beta}_{u}) \!\!=\!\! \frac{1}{\big(\tilde{\mathbf{a}}_y(\tilde{\beta}_{u})\!\otimes\!\tilde{\mathbf{a}}_x(\alpha)\big)^H\!{\widetilde{\mathbf{U}}}_n {\widetilde{\mathbf{U}}}_n^{{H}}\! \big(\tilde{\mathbf{a}}_y(\tilde{\beta}_{u})\!\otimes\!\tilde{\mathbf{a}}_x(\alpha)\big)}.
	\end{equation}
}Similarly, for each estimated $\tilde{\beta}_{u}$, two peaks will be found in the spectrum (as illustrated in Fig. \ref{fig:near-field UAV}(b)), corresponding to the true and ambiguous ones, denoted by $\hat{\alpha}_k$ and $\hat{\alpha}_k^{{(\rm am)}}$, respectively, for which
\begin{equation}
	\label{eq:angular ambiguity alpha}
	\hat{\alpha}_k^{{(\rm am)}} =\left\{\begin{matrix}
	\hat{\alpha}_k + 1,~~\hat{\alpha}_k = \sin\theta_k\sin\phi_k < 0,\\ 
	\hat{\alpha}_k - 1,~~\hat{\alpha}_k =  \sin\theta_k\sin\phi_k > 0.
	\end{matrix}\right.
\end{equation}
This can be \emph{intuitively} understood, since by leveraging the \emph{sparsity} of  virtual S-UPA, we have $ \tilde{\mathbf{a}}_x(\alpha) = \tilde{\mathbf{a}}_x(\alpha \pm 1) $ and hence $ F_{u}(\alpha \pm 1, \tilde{\beta}_{u}) = F_{u}(\alpha, \tilde{\beta}_{u})$. 
Moreover, it can be shown that for both $ \hat{\beta}_k $ and $ \hat{\beta}_k^{(\rm am)} $, the corresponding sets of estimated $ \{\hat{\alpha}_k, \hat{\alpha}_k^{{(\rm am)}}\}$ are the same. 
\begin{figure}[t]
	\vspace{-14pt}
	\includegraphics[width=0.425\textwidth]{./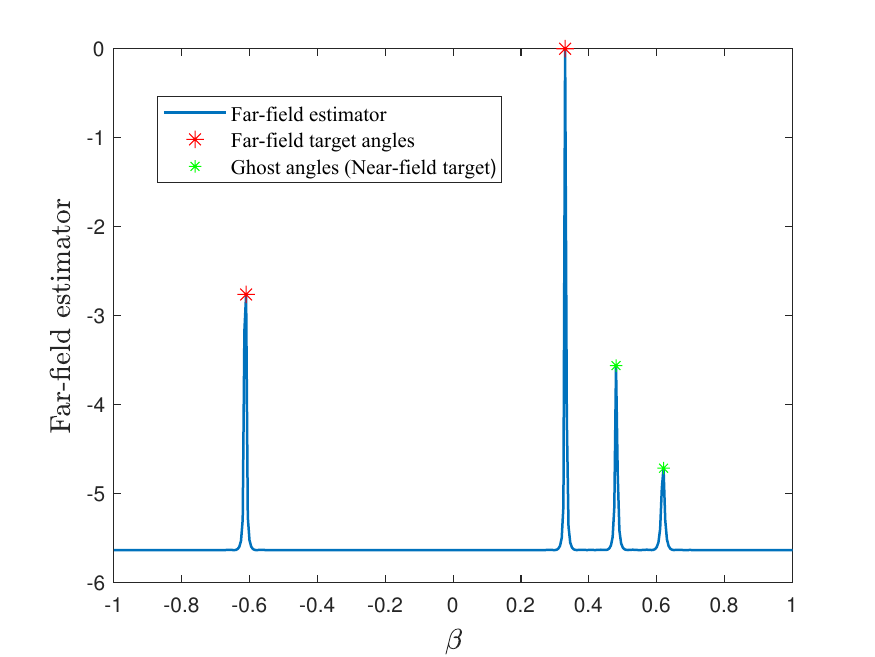}
	\centering
	\caption{Spectrum of far-field estimator in \eqref{eq:2D MUSIC for far-field} proposed in \cite{he2011efficient}. System parameters are set as $f = 10$ GHz and $N_x = N_y = 61$. The elevation and azimuth angles for far-field targets are ($\frac{\pi}{3}$, $-\frac{\pi}{4}$, $1000$ m) and ($\frac{\pi}{8}$, $\frac{2\pi}{3}$, $1500$ m), while the locations for near-field targets are set as ($\frac{5\pi}{13}$, ${0.23\pi}$, $30$ m) and ($\frac{\pi}{4}$, $\frac{5\pi}{21}$, $40$ m).}
	\label{fig:incorrectClassification}
	\vspace{-12pt}
\end{figure}
Based on the above, a total of $ 4K $ pairs of parameters ($\alpha$, $\beta$) can be estimated by the proposed method, which are collected into the following set 
\begin{equation}
	\begin{aligned}
		\mathcal{A} \triangleq \{(\hat{\alpha}_k, \hat{\beta}_k),&(\hat{\alpha}_k^{(\rm am)}, \hat{\beta}_k),\\ &(\hat{\alpha}_k, \hat{\beta}_k^{(\rm am)}),  (\hat{\alpha}_k^{(\rm am)}, \hat{\beta}_k^{(\rm am)}), \forall k \in \mathcal{K}\}.
	\end{aligned}
\end{equation} 
For each parameter pair $(\alpha_g, \beta_g) \in \mathcal{A}, g \in \mathcal{G} \triangleq\{1,2,\cdots,4K\}$, by solving the equations $ \sin\theta_g\cos\phi_g = {\alpha}_{g} $ and $ \sin\theta_{g}\sin\phi_{g} = {\beta}_{g} $, the candidate angle of the target can be obtained as 
\begin{equation}
	{\theta}_{g} = \arcsin\sqrt{\hat{\alpha}_{g}^2+\hat{\beta}_{g}^2},
	\vspace{-5pt}
\end{equation}
\begin{equation}
	{\phi}_{g} = \text{angle}({\hat{\alpha}_{g}+\jmath\hat{\beta}_{g}}),
\end{equation}
where $ {\theta}_{g} $ and $ {\phi}_{g} $ denote  candidate elevation and azimuth angles of the target.
As such, we can obtain $4K$ candidate 2D angles including the true and ambiguous ones for the far-field and near-field targets. 
Although angular ambiguity cannot be removed in the angle estimation,  an effective classification method will be presented in the next to distinguish far-field and near-field targets, as well as eliminate angular ambiguity.

\vspace{-4pt}
\subsection{Target Classification, Near-field Target Range Estimation and Ambiguity Elimination}
\label{sec:range estimation}
In this section, given the candidate  2D-angular parameters of targets, we jointly classify the mixed-field targets, resolve angular ambiguity, and estimate the ranges of targets, by capitalizing on distinct spectrum patterns of far-field and near-field targets in the range domain.

Specifically, by the eigenvalue decomposition (EVD) of $ \mathbf{R}_{\boldsymbol{\eta}} $ in \eqref{eq:orignal covariance matrix}, the original covariance matrix can be decomposed as
\begin{equation}
	{{\mathbf{R}}}_{\boldsymbol{\eta}} = {\mathbf{U}}_s {\boldsymbol{\Sigma}}_s {\mathbf{U}}_s^{{H}}+{\mathbf{U}}_n {\boldsymbol{\Sigma}}_n {\mathbf{U}}_n^{{H}}, 
\end{equation}
where $ {\mathbf{U}}_s \in \mathbb{C}^{N \times K} $ and $ {\mathbf{U}}_n \in \mathbb{C}^{N \times (N-K)} $ denote the signal and noise subspaces, respectively. Moreover, $ {\boldsymbol{\Sigma}}_s \in \mathbb{C}^{K \times K} $ is a diagonal matrix consisting of $ K $ largest eigenvalues, while $ {\boldsymbol{\Sigma}}_n \in \mathbb{C}^{(N-K) \times (N-K)} $ is a diagonal matrix containing $ (N-K) $ smallest eigenvalues.
Before introducing our proposed classification method, we first discuss the issues in existing mixed-field target classification methods  \cite{he2011efficient,liu2013efficient}.

\underline{\textbf{Issues of existing classification method:}} 
In \cite{he2011efficient,liu2013efficient}, the authors propose to obtain the far-field target parameter $\beta$ from the  spectrum $F(r, \theta, \phi)$ by letting $r\to \infty$
\begin{equation}
	\label{eq:2D MUSIC for far-field}
	{F}(\infty, \theta, \phi) = \frac{1}{\mathbf{b}^{H}(\infty,\theta, \phi){\mathbf{U}}_n {\mathbf{U}}_n^{{H}} \mathbf{b}(\infty, \theta, \phi)}.
\end{equation} 
Since the peaks of far-field targets in 3D MUSIC spectrum appears at $r\to \infty$ and $ \theta=\theta_{k}, \phi=\phi_{k}, k\in\mathcal{K}_1 $ \cite{he2011efficient}, the estimator in \eqref{eq:1D far-field beta} can exactly estimate $K_1$ far-field targets.
Similar to Section \ref{sec:estimation for far-field users}, the 2D-angular MUSIC estimator in \eqref{eq:2D MUSIC for far-field} can be decoupled by two 1D MUSIC spectrums.
For simplicity, we only provide the estimator for parameter $\beta$, given by
\begin{equation}
	\label{eq:1D far-field beta}
	\bar{F}(\beta) = \mathbf{c}_{{N_x}}^{{H}}\mathbf{T}(\beta)^{-1}\mathbf{c}_{{N_x}},
\end{equation}
where $ \mathbf{T}(\beta) = \big(\mathbf{a}_y(\beta)\otimes\mathbf{I}_{N_x}\big)^H{\mathbf{U}}_n{\mathbf{U}}_n^H \big(\mathbf{a}_y(\beta)\otimes\mathbf{I}_{N_x}\big)$ and $ \mathbf{c}_{{N_x}}^T =[\underbrace{0,\cdots,0,}_{\frac{N_x-1}{2}}1,\underbrace{0,\cdots,0}_{\frac{N_x-1}{2}}] $.

\begin{figure}[t]
	\vspace{-14pt}
	\includegraphics[width=0.425\textwidth]{./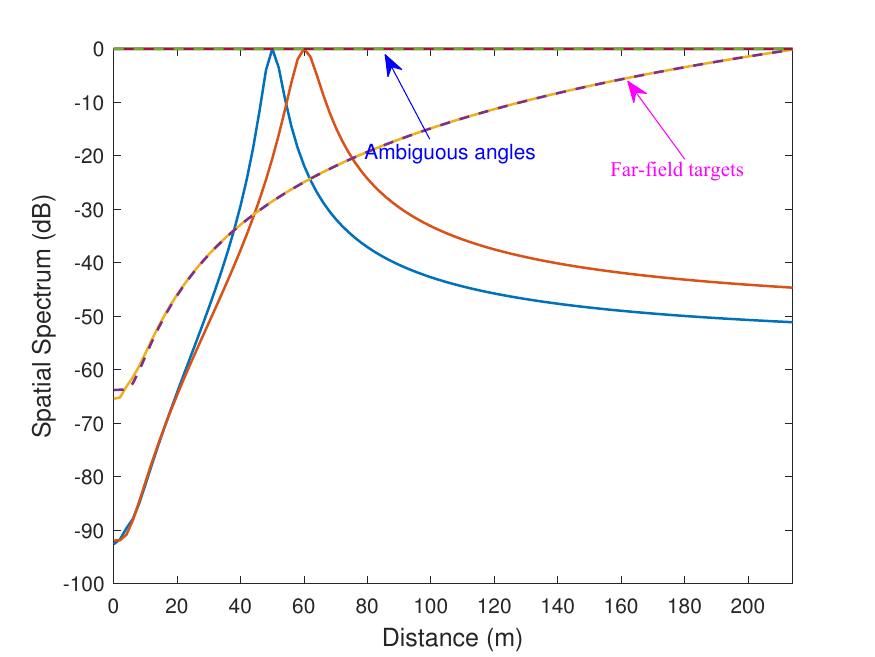}
	\centering
	\caption{Illustration of proposed classification method. System parameters are set as $f = 10$ GHz and $N_x = N_y = 61$. The ranges of the two near-field targets are $50$ and $60$ m.}
	\label{fig:ProposedClasification}
	\vspace{-16pt}
\end{figure}
Note that, when near-field targets approach the Rayleigh distance, there will exit peaks associated with near-field targets in the spectrum $\bar{F}(\beta)$, which results in mistake in the far-field versus near-field target classification.
This can be intuitively understood, since in this case, the near-field channel is highly correlated to the far-field channel at the same angle,
thus making it possible to be wrongly detected as far-field target. 
For illustration, we plot the spectrum of $\bar{F}(\beta)$ in Fig. \ref{fig:incorrectClassification}, where two near-field targets are located at the ranges of  $30$ m and $40$ m  with the Rayleigh distance given by $Z_R = 108 $ m.
It is observed that there exit high peaks associated with near-field targets in the spectrum of $\bar{F}(\beta)$, since the the correlation $|\mathbf{b}(\infty, \theta_3, \phi_3)^H\mathbf{b}(r_{3}, \theta_3, \phi_3)| = 0.96 $ and $|\mathbf{b}(\infty, \theta_4, \phi_4)^H\mathbf{b}(r_{4}, \theta_4, \phi_4)|= 0.98 $, which are strongly correlated.

\underline{\textbf{Proposed$\,$ classification method$\,$ under $\,$half-wavelength}}
\underline{\textbf{inter-antenna spacing:}} 
To address the above issue in existing works on mixed-field classification, we propose to leverage the characteristics of 3D MUSIC spectrum in the range domain to jointly distinguish far-field and near-field targets, estimate ranges of targets, as well as eliminate the ambiguity in angle estimation obtained in Section \eqref{sec:estimation for far-field users}.

To this end, for each candidate target 2D angles, we consider the original covariance matrix $\mathbf{R}_{\boldsymbol{\eta}}$ and obtain its MUSIC spectrum in the range domain given the 2D angles, which is given by
\begin{equation}
	\label{eq:range MUSIC}
	F_g(r, {\theta}_g, {\phi}_g) = \frac{1}{\mathbf{b}^{H}(r, {\theta}_g, {\phi}_g ){\mathbf{U}}_n {\mathbf{U}}_n^{{H}} \mathbf{b}^{H}(r, {\theta}_g,{\phi}_g)},
\end{equation}
where $g\in\mathcal{G}$ denotes the candidate target angle index.
Then, in Fig.~\ref{fig:ProposedClasification}, we numerically show the 3D MUSIC spectrum in the range domain given the 2D angles of the near-field target, far-field target, as well as ambiguous 2D angles. Several important observations are made as follows.
\begin{itemize}
	\item[1.]{\bf{(Near-field target angle)}} When $\{\theta_g, \theta_{g}\}$ are the 2D angles of a near-field target, \emph{its 3D MUSIC spectrum in the range domain will first increases and then decreases, when the range is smaller than the Rayleigh distance} (shown in Fig. \ref{fig:ProposedClasification}), which is referred to as {\bf{Pattern 1}}. In addition, it achieves the peak value in the spectrum $F_g(r, {\theta}_g, {\phi}_g)$ at the target range, which is smaller than the Rayleigh distance $Z_{\rm R}$. 
    (see Fig. \ref{fig:ProposedClasification}). 
	\item[2.]{\bf{(Far-field target angle)}} When $\{\theta_g, \theta_{g}\}$ are the 2D angles of a near-field target, \emph{its 3D MUSIC spectrum in the range domain will monotonically increase when $r\le Z_{\rm R}$}, which is called as {\bf{Pattern 2}}. This is expected since the peak value of the spectrum $F_g(r, {\theta}_g, {\phi}_g)$ is achieved at $ r \to \infty$.
	\item[3.] {\bf{(Ambiguous angle)}} When $\{\theta_g, \theta_{g}\}$ are ambiguous angles, \emph{its 3D MUSIC spectrum in the range domain remains almost static and has no significant peaks/values when $r\le Z_{\rm R}$}, which is terms as {\bf{Pattern 3}}.
    This can be understood based on the MUSIC principle, since only in the true target angles, there will exist MUSIC spectrum  peak due to the orthogonality of signal and noise subspaces.    
\end{itemize}

The above observations indicate that the three types of estimated 2D angles exhibit significantly different patterns in the range-domain MUSIC spectrum. This thus leads to the proposed joint target classification-localization method, as presented below.
\\
\noindent{\footnotesize
\begin{tabular}{|c|c|}
    \hline
    \centering
    $F_g(r,\theta_g,\phi_g) \in $ {\bf{Pattern 1}} & \thead{$(\theta_g,\phi_g)$ are 2D angles of a near-field target.\\ $\hat{r}=\arg\max\limits_{r} F_g(r,\theta_g,\phi_g)$.} \\
    \hline
    $F_g(r,\theta_g,\phi_g) \in $ {\bf{Pattern 2}} &$(\theta_g,\phi_g)$ are 2D angles of a far-field target. \\ \hline
    $F_g(r,\theta_g,\phi_g) \in $ {\bf{Pattern 3}} & $(\theta_g,\phi_g)$ are  ambiguous angles. \\
    \hline
\end{tabular}}


\begin{remark}[Computational complexity]
	\emph{The computational complexity of  proposed mixed-field localization method is dominated by three parts: 1) pre-processing; 2) candidate angle parameter estimation;  3) joint target classification, angle ambiguity elimination and near-field target range estimation.
	In particular, the complexity of pre-processing is dominant in \eqref{eq:inverse}, which amounts to $ \mathcal{O}(LN^3) $, due to the matrix inverse of $ \mathbf{W} $ for $ L $ times.
	For the angle parameter estimation, the dominant complexity mainly involves in the eigenvalue decomposition in \eqref{eq:near-field EVD} that is $ \mathcal{O}\Big(\frac{(N_x+1)^3(N_y+1)^3}{64}\Big) $ and the grid searches for parameter $ \beta_{k} $ and $ \alpha_{k} $ which are $ \mathcal{O}\Big(\ell_y\frac{(N_x+1)^2}{16}+\ell_y\frac{(N_x+1)^3}{8}\Big) $ and  $ \mathcal{O}\Big(2K\ell_xN^2\Big) $, where $\ell_x$ and $\ell_y$ denote the number of search grids for parameters $\alpha$ and $\beta$, respectively~\cite{huang2023low}.
	Then, for the joint target classification, near-field target range estimation, and ambiguity elimination, the dominant complexity is contributed by the grid-search in \eqref{eq:range MUSIC}, given by $ \mathcal{O}(4K\ell_d N^2) $, where $\ell_d$ represents the number of search grids for distance.
	Hence, the overall computational complexity of the proposed localization method is $ \mathcal{O}(K\ell_xN^2 + \ell_yN_x^3 + K\ell_d N^2 + LN^3) $. 	
	As compared to the proposed methods, the 3D MUSIC algorithm incurs a heavy computational complexity of $ \mathcal{O}(\ell_x\ell_y\ell_d N^2 + LN^3) $, since the number of searched grids satisfy $ \ell_x,\ell_y,\ell_d \gg N $.
    Therefore, the proposed decoupled 3D MUSIC-based localization method achieves significant reduction in computational complexity as compared to the 3D MUSIC algorithm.}
\end{remark}
\vspace{-15pt}
\subsection{Extensions and Discussions}
\label{sec:Extensions and Discussions}
\underline{\textbf{Low-complexity MUSIC algorithm:}} When the received signal power of far-field targets is comparable to that near-field targets, we can employ the 2D DFT-based method \cite{fan2018angle} to coarsely estimate the angle parameters for both far-field and near field targets.
As such, we only need to perform grid search in \eqref{eq:far-field beta} and \eqref{eq:far-field alpha estimation} around coarse angles estimated by the DFT-method, which significantly reduces the computational complexity in grid search.

Specifically, we first rearrange the elements in \eqref{eq:anti} to form the following $N_x \times N_y$ matrix
{\small
\begin{equation}
	\mathbf{M} \!\!=\!\! \left[\begin{array}{cccc}
	\!\!\![\tilde{\mathbf{y}}]_1 \!& [\tilde{\mathbf{y}}]_{N_x+1} \!& \ldots \!\!& \!\!\![\tilde{\mathbf{y}}]_{(N_y-1)N_x+1}\!\! \\
	\!\!\![\tilde{\mathbf{y}}]_2 \!& [\tilde{\mathbf{y}}]_{N_x+2} \!& \ldots \!\!& \!\!\![\tilde{\mathbf{y}}]_{(N_y-1)N_x+2}\!\! \\
	\!\!\!\vdots \!& \vdots \!& \ddots \!& \vdots\!\! \\
	\!\!\![\tilde{\mathbf{y}}]_{N_x} \!& [\tilde{\mathbf{y}}]_{2N_x} \!& \ldots \!\!& \!\!\![\tilde{\mathbf{y}}]_{N_yN_x}\!\!
	\end{array}\right] .
\end{equation}}
In particular, the $(i,j)$-th entry of the matrix $\mathbf{M}$ is given by 
\begin{equation}
	[\mathbf{M}]_{i,j} \!\!=\!\! \sum_{k=1}^K g_k e^{\jmath 2d_0(i\alpha_k \!+\! j\beta_k)}e^{-\jmath 2d_0(\frac{N_x+1}{2}\alpha_k \!+\! \frac{N_y+1}{2}\beta_k)},
\end{equation} 
where $i = 1,2,\cdots,N_x$ and $j = 1,2,\cdots,N_y$.
Then, we define two DFT matrices $\mathbf{D}_x \in \mathbb{C}^{N_x \times N_x}$ and $\mathbf{D}_y \in \mathbb{C}^{N_y \times N_y}$, which are given by
$[\mathbf{D}_x]_{i_x,j_x} = \frac{1}{N_x}e^{-\jmath2\pi \frac{(i_x-1)(j_x-1)}{N_x}}, i_x,j_x = 1,2,\cdots,N_x $ and
$[\mathbf{D}_y]_{i_y,j_y} = \frac{1}{N_y}e^{-\jmath2\pi \frac{(i_y-1)(j_y-1)}{N_y}}, i_y,j_y = 1,2,\cdots,N_y. $
As such, the normalized 2D-DFT of the constructed matrix $\mathbf{M}$ can be calculated as
\begin{align}
	\big|&[\mathbf{M}_{\rm DFT}]_{{i,j}}\big| = \big|[\mathbf{D}_x\mathbf{M}\mathbf{D}_y]_{{i,j}}\big|\notag\\
	&=\frac{1}{N_x N_y} \Bigg|\sum_{i_x = 1}^{N_x} \sum_{i_y=1}^{N_y}[\mathbf{M}]_{{i_x, i_y}}\mathrm{e}^{-\jmath 2 \pi\left(\frac{(i_x-1)i}{N_x}+\frac{(i_y-1)j}{N_y}\right)}\Bigg|\notag\\
	&\triangleq\!\! \frac{1}{N_x N_y}\Bigg|\sum_{k=1}^{K}g_k\Xi_{N_x}(\alpha_k - {i}/{N_x})\Xi_{N_y}(\beta_k - {j}/{N_y})\Bigg|,
\end{align}
where $\Xi_{N}(x) \triangleq \frac{\sin({\pi N x})}{\sin({\pi x})}$ is the Dirichlet Sinc function. 

It can be observed that when $(\alpha_k - {i}/{N_x})\pi = \ell_x\pi, \forall \ell_x \in \mathbb{Z}$ and $(\beta_k - {j}/{N_y})\pi = \ell_y\pi, \forall \ell_y \in \mathbb{Z} $ (i.e., $i = N_x(\alpha_k-1)$ and $j = N_y(\beta_k-1)$),  $\big|\mathbf{M}_{\rm DFT}[{i,j}]\big|$ forms a peak with a height $|g_k|$.
Hence, by denoting ($\hat{i}_k$, $\hat{j}_k$) as an integer pair where $\big|[\mathbf{M}_{\rm DFT}]_{{\hat{i}_k,\hat{j}_k}}\big|$ reaches a peak value, the parameter estimation for the $k$-th target is given by $\hat{\alpha}_k^{(1)} = \frac{\hat{i}_k}{N_x}, \hat{\alpha}_k^{(2)} = \frac{\hat{i}_k}{N_x}-1$, and $\hat{\beta}_k^{(1)} = \frac{\hat{j}_k}{N_y}, \hat{\beta}_k^{(2)} = \frac{\hat{j}_k}{N_y}-1$,
where $\hat{\alpha}_k^{(1)}$ and $\hat{\alpha}_k^{(2)}$ are estimated parameter angles for the $k$-th target, corresponding to the true and ambiguous angle parameters, respectively.
Similarly, $\hat{\beta}_k^{(1)}$ and $\hat{\beta}_k^{(2)}$ follow the same pattern as $\hat{\alpha}_k^{(1)}$ and $\hat{\alpha}_k^{(2)}$. As such, we  obtain the coarse angle parameters $ \{\hat{\alpha}_k^{(1)},\hat{\alpha}_k^{(2)},\hat{\beta}_k^{(1)},\hat{\beta}_k^{(2)}, \forall k \in \mathcal{K}\} $ with a resolution $1/N_x$ and $1/N_y$ for the parameter $\alpha$ and $\beta$ estimation, respectively.
To facilitate understanding, in Fig. \ref{fig:DFT}, we numerically plot an example of 2D-DFT of the constructed matrix $\mathbf{M}$ with $N_x = N_y = 101 $.

Then, for the grid searches in the estimators \eqref{eq:far-field beta} and \eqref{eq:far-field alpha estimation}, we only need to {refine angle estimation in the range $\{ [\hat{\alpha}_k^{(1)}-\frac{1}{N_x}, \hat{\alpha}_k^{(1)}+\frac{1}{N_x}],[\hat{\beta}_k^{(1)}-\frac{1}{N_y}, \hat{\beta}_k^{(1)}+\frac{1}{N_y}],\forall k \in \mathcal{K}\} $,} resulting in {computational complexity decreasing to $ \mathcal{O}(K\ell_xNN_y + \ell_yN_x^3/N_y + K\ell_d N^2 + LN^3) $.}

However, since the height of the peaks in $\mathbf{M}_{\rm DFT}$ depends on the signal powers $|g_k|$, when the received signal powers at the BS from far-field targets are relatively weak, their peaks in $\mathbf{M}_{\rm DFT}$ may be obscured by the side-lobes of near-field targets with stronger DFT-lobe signal power. In this case, the coarse estimation method based on 2D-DFT cannot be used to reduce the computational complexity of the proposed algorithm. How to overcome this issue will be left for our future work.
\begin{figure}[t]
	\vspace{-14pt}
	\includegraphics[width=0.4\textwidth]{./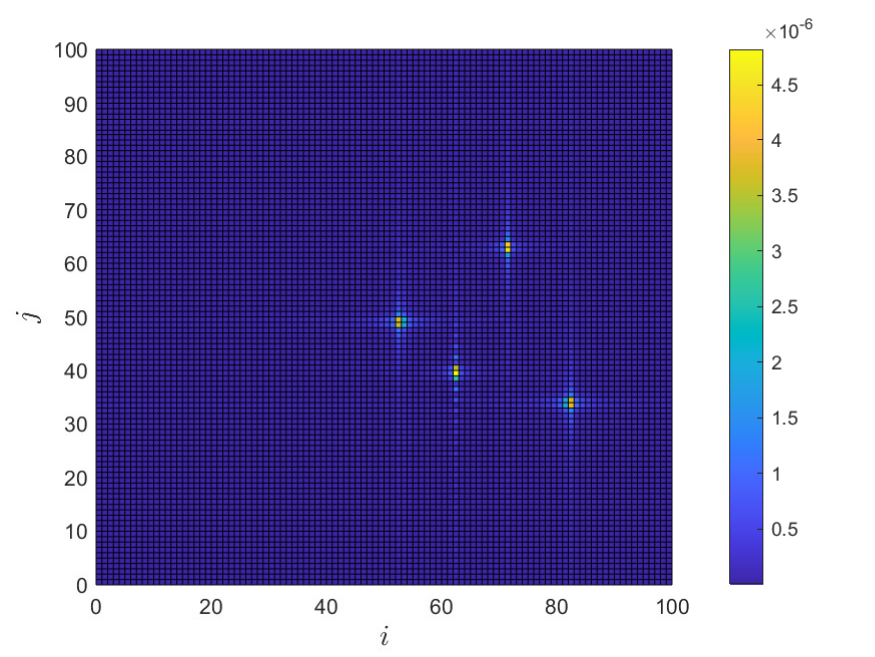}
	\centering
	\caption{An example of normalized 2D-DFT of the constructed matrix $\mathbf{M}$. System parameters are the same as that in Fig. \ref{fig:ProposedClasification}. Taking one peak index ($\hat{i} = 62$, $\hat{j} = 39$) as an example (4 peaks in total), the estimated parameter $\alpha$ are $\hat{\alpha}^{(1)} = \frac{62}{101} = 0.6139$ and $\hat{\alpha}^{(2)} = \frac{62}{101} - 1 = -0.3861$, while parameter $\beta$ are estimated by $\hat{\beta}^{(1)} = \frac{39}{101} = 0.3861$ and $\hat{\beta}^{(2)} = \frac{39}{101} -1 = -0.6139$.
	The estimated angle parameters are corresponding to the near-field target located at ($-\frac{5\pi}{13}$, ${0.23\pi}$, $50$ m) where the parameter $\alpha = \sin(-\frac{5\pi}{13})\cos(0.23\pi) = -0.7014 $ and $\beta = \sin(-\frac{5\pi}{13})\sin(0.23\pi) = -0.6183 $.
	Obviously, the estimated parameter $\hat{\alpha}^{(1)}$ and $\hat{\beta}^{(1)}$ are corresponding ambiguous angles.}
	\label{fig:DFT}
	\vspace{-16pt}
\end{figure}

\underline{\textbf{Active sensing scenarios:}}
The proposed mixed near-field and far-field target localization method can also be extended to the active-sensing scenario where the BS estimates the target locations based on echo signals transmitted by itself \cite{Wang2023NFISAC}.
For this case, let $\mathbf{x}_{\rm t}(m) = \mathbf{W}_{\rm t}\mathbf{s}(m) \in \mathbb{C}^{N \times 1}$ denote the transmitted signal by the BS at time $m$, where $\mathbf{W}_{\rm t} \in \mathbb{C}^{N \times N_{\rm RF}}$ represents the transmit combining matrix.
Then, the echo signal (reflected by targets) received by the BS at discrete time $m$ is given by $\mathbf{y}(m) = \mathbf{W}_{\rm r}(m)\mathbf{G}\mathbf{x}(m) +  \mathbf{W}_{\rm r}(m)\mathbf{n}(m)$,
where $\mathbf{W}_{\rm r}(m)$ represents receive combining matrix at time $m$ and $\mathbf{x}(m) = [x_1(m),x_2(m),\cdots,x_K(m)]^T \in \mathbb{C}^{K \times 1} $ denotes the equivalent transmitted signal vector of the $K$ targets.
In particular, $x_k(m) \in \mathbb{C}$ is the equivalent transmitted signal at $k$-th target, given by
\begin{equation}
	x_k(m) = \gamma_k\mathbf{b}^H(r_k, \theta_k, \phi_k)\mathbf{W}_{\rm t}\mathbf{s}(m),~~\forall k\in\mathcal{K},
\end{equation}
where $r_k \to \infty, \forall k \in \mathcal{K}_1$ and $\gamma_k$ denotes the reflection coefficient, which contains both
the round-trip path-loss and the radar cross section (RCS).

As such, similar to Section \ref{sec:signal reconstruction}, the received signal at the BS UPA can be reconstructed as $\hat{\mathbf{y}} = \mathbf{G} \mathbf{x} + \mathbf{n}$.
Hence, the covariance matrix of signal $\hat{\mathbf{y}}$ is given by
\begin{equation}
	\label{eq:echo covariance matrix}
	\mathbf{R} = \mathbb{E}\!\left\{\hat{\mathbf{y}} \hat{\mathbf{y}}^{{H}}\right\} = \mathbf{G}\mathbf{R}_x\mathbf{G}^H + \sigma^2 \mathbf{I}_{N},
\end{equation}
where $\mathbf{R}_x = \mathbb{E}\{{\mathbf{x}} {\mathbf{x}}^{H}\} $ denotes the source covariance matrix.
Mathematically, the ($i$, $j$) entry of $\mathbf{R}_x$ can be obtained as
\begin{equation}
	{[\mathbf{R}_x]_{i,j}} = \gamma_i\gamma_j^\ast\mathbf{b}^H(r_i, \theta_i, \phi_i)\mathbf{W}_{\rm t}\mathbf{W}_{\rm t}^H\mathbf{b}(r_j, \theta_j, \phi_j),
\end{equation}
where $r_i,r_j \to \infty $, if $i,j \in \mathcal{K}_2$.
By carefully designing the transmit combining matrix $\mathbf{W}_{\rm t}$, we can achieve $\mathbf{W}_{\rm t}\mathbf{W}_{\rm t}^H = \diag\{\mathbf{I}_{\rm RF}, \mathbf{0}_{N-N_{\rm RF}}\}$ such that $\mathbf{R}_x[i,j]$ can be rewritten as $\mathbf{R}_x[i,j] = \widetilde{\mathbf{b}}^H(r_i, \theta_i, \phi_i)\widetilde{\mathbf{b}}(r_i, \theta_i, \phi_i) $, where $\widetilde{\mathbf{b}}^H(r_i, \theta_i, \phi_i) = [{\mathbf{b}}^H(r_i, \theta_i, \phi_i)]_{1:N_{\rm RF}}$ and $\widetilde{\mathbf{b}}^H(r_j, \theta_j, \phi_j) = [{\mathbf{b}}^H(r_j, \theta_j, \phi_j)]_{1:N_{\rm RF}} $.

It is worth noting that when targets are widely separated and $N_{\rm RF}$ is large enough, the correlation between $\widetilde{\mathbf{b}}^H(r_i, \theta_i, \phi_i)$ and $\widetilde{\mathbf{b}}^H(r_j, \theta_j, \phi_j)$ ($i \neq j$) is negligible. As a result, $\mathbf{R}_x$ is approximately a \emph{diagonal} matrix, i.e., $\mathbf{R}_x \approx \diag\{|\gamma_1|^2N_{\rm RF},\cdots,|\gamma_k|^2N_{\rm RF},\cdots,|\gamma_K|^2N_{\rm RF}\}$.
Hence, the original covariance matrix of the echo signals in \eqref{eq:echo covariance matrix} has the same form as the covariance matrix of signals in \eqref{eq:orignal covariance matrix}, for which our proposed mixed-field target localization method can be directly applied.
However, when channel correlation between different targets is large enough, the anti-diagonal elements of $\mathbf{R}$ will be a linear combination of the signals received by multiple antennas, leading to erroneous peaks in the corresponding angular spectrum.
How to address this issue will be left for our future work.

\section{Cram\' er-Rao bound derivation}
In this section, we characterize the CRB for the mixed-field parameter estimation for evaluating the performance of the proposed localization algorithm.

Specifically, the stochastic CRB provides a lower bound on the covariance of an arbitrary unbiased estimator $\hat{\boldsymbol{\eta}}$, where $\hat{\boldsymbol{\eta}}$ denotes the estimated parameter vector.
Mathematically, 
\begin{equation}
	\mathrm{MSE}([\hat{\boldsymbol{\eta}}]_{i}) \!=\! {\mathbb{E}}\Big\{\left([\hat{\boldsymbol{\eta}}]_{i}\!-\![\boldsymbol{\eta}]_{i}\right)^{2}\Big\} \!\geq\! \left[\mathbf{CRB}(\boldsymbol{\eta})\right]_{i,i}, \forall i \in \mathcal{I},
	\notag
\end{equation}
where $ \boldsymbol{\eta} $ represents the parameter vector and $ \mathbf{CRB}(\boldsymbol{\eta}) $ denotes the corresponding CRB matrix, which can be obtained by the inversion of the Fisher information matrix (FIM).
In addition, $\mathcal{I} \triangleq \{1,2,\cdots, 2K+K_2\}$ denotes the index set for parameters to be estimated.
In this paper, the parameter vector of interest is given by
\begin{equation}
	\begin{aligned}
		\boldsymbol{\eta} = &[\theta_1, \cdots, \theta_{K_1},\phi_1, \cdots, \phi_{K_1}, \theta_{K_1+1}, \cdots, \theta_{K},\\
		&~~~~~~~~~~~~~~~~~~\phi_{K_1+1}, \cdots, \phi_{K}, r_{K_1+1}, \cdots, r_{K}]^T.
	\end{aligned}
\end{equation}
Moreover, the parameters, such as noise power $ \sigma^2 $ and the entries in $ \mathbf{S} = \mathbb{E}\{\mathbf{s}(t)\mathbf{s}^H(t)\} $ (including imaginary and real parts), are regarded as nuisance parameters.
The signal model is based on the reconstructed signals received at UPA in \eqref{eq:inverse}. 

Based on the above, the $ (i,j) $-th entry of the stochastic CRB matrix $ \mathbf{CRB}(\boldsymbol{\eta}) \in \mathbb{C}^{(2K+K_2) \times (2K+K_2)} $ is given by 
\begin{equation}
	\left[\mathbf{CRB}^{-1}(\boldsymbol{\eta})\right]_{i, j} \!\!=\!\! \frac{2 L}{\sigma^2} \operatorname{Re}\!\left\{\operatorname{Tr}\!\left[\frac{\partial \mathbf{G}^H}{\partial [\boldsymbol{\eta}]_{i}} \Pi_{\boldsymbol{G}}^{\perp} \frac{\partial \mathbf{G}}{\partial [\boldsymbol{\eta}]_j} \mathbf{Q}^T\right]\right\},
\end{equation}
where $ \Pi_{\mathbf{G}}^{\perp} $ denotes the orthogonal projection matrix onto the null space of the ASM $ \mathbf{G} $.
Moreover, $ \Pi_{\mathbf{G}}^{\perp}\in \mathrm{C}^{N \times N} $ and $ \mathbf{Q} \in \mathrm{C}^{K \times K} $ are respectively given by
\begin{equation}
	\Pi_{\mathbf{G}}^{\perp} = \mathbf{I}_{N} - \mathbf{G}(\mathbf{G}^H\mathbf{G})^{-1}\mathbf{G}^H,
\end{equation}
\begin{equation}
	\mathbf{Q}=\mathbf{S} \mathbf{G}^H \mathbf{R}^{-1} \mathbf{G}\mathbf{S}.
\end{equation}
To obtain the closed-form CRB matrix, we first form the FIM $ \mathbf{CRB}^{-1}(\boldsymbol{\eta}) $ as follows 
\begin{equation}
		\mathbf{CRB}^{-1}(\boldsymbol{\eta})= \mathbf{F} \triangleq \frac{2L}{\sigma^2}
		\left[\begin{array}{lll}
		\mathbf{F}_{\rm FF} & \mathbf{F}_{\rm FN} \\
		\mathbf{F}_{\rm NF} & \mathbf{F}_{\rm NN}
	\end{array}\right],
\end{equation} 
where $ \mathbf{F}_{\rm FF} \in \mathrm{C}^{2K_1 \times 2K_1} $ and $ \mathbf{F}_{\rm NN} \in \mathrm{C}^{3K_2 \times 3K_2} $ denote the FIM partitions of far-field and near-field parameters.
In addition, $ \mathbf{F}_{\rm FN} \in \mathrm{C}^{2K_1 \times 3K_2 } $ and $ \mathbf{F}_{\rm NF} \in \mathrm{C}^{3K_2 \times 2K_1 } $ represent the mutual FIM partitions. Then, by partitioning $ \mathbf{Q} $ from the sub-matrices $ \mathbf{Q}_1 \in \mathrm{C}^{K_1 \times K_1} $, $ \mathbf{Q}_2 \in \mathrm{C}^{K_1 \times K_2} $, $ \mathbf{Q}_3 \in \mathrm{C}^{K_2 \times K_1} $ and $ \mathbf{Q}_4 \in \mathrm{C}^{K_2 \times K_2} $, $ \mathbf{Q} $ can be rewritten as
\begin{equation}
	\mathbf{Q} = 
	\left[\begin{array}{lll}
		\mathbf{Q}_{1} & \mathbf{Q}_{2} \\
		\mathbf{Q}_{3} & \mathbf{Q}_{4}
	\end{array}\right].
\end{equation}
Similar to the manipulations in \cite{stoica2001stochastic}, the sub-matrices of the FIM can be obtained as 
\begin{equation}
	\mathbf{F}_{\rm FF}=\left\{\operatorname{Re}\left[\left(\mathbf{D}_{\mathrm{F}}^H \Pi_{\mathbf{G}}^{\perp} \mathbf{D}_{\mathrm{F}}\right) \odot (\mathbf{J}_{\rm FF} \otimes\mathbf{Q}_1^T)\right]\right\},
\end{equation}
\begin{equation}
	\mathbf{F}_{\rm FN}=\left\{\operatorname{Re}\left[\left(\mathbf{D}_{\mathrm{F}}^H \Pi_{\mathbf{G}}^{\perp} \mathbf{D}_{\mathrm{N}}\right) \odot (\mathbf{J}_{\rm FN} \otimes\mathbf{Q}_3^T)\right]\right\},
\end{equation}
\begin{equation}
	\mathbf{F}_{\rm NF}=\left\{\operatorname{Re}\left[\left(\mathbf{D}_{\mathrm{N}}^H \Pi_{\mathbf{G}}^{\perp} \mathbf{D}_{\mathrm{F}}\right) \odot (\mathbf{J}_{\rm NF} \otimes\mathbf{Q}_2^T)\right]\right\},
\end{equation}
\begin{equation}
	\mathbf{F}_{\rm NN}=\left\{\operatorname{Re}\left[\left(\mathbf{D}_{\mathrm{N}}^H \Pi_{\mathbf{G}}^{\perp} \mathbf{D}_{\mathrm{N}}\right) \odot (\mathbf{J}_{\rm NN} \otimes\mathbf{Q}_4^T)\right]\right\},
\end{equation}
where we denote 
\begin{equation}
	\mathbf{D}_{\mathrm{F}} = \left[\sum_{k_1=1}^{K_1} \frac{\partial \mathbf{A}}{\partial \theta_{k_1}}, \sum_{k_1=1}^{K_1}\frac{\partial \mathbf{A}}{\partial \phi_{k_1}}\right],
\end{equation}
\begin{equation}
	\mathbf{D}_{\mathrm{N}} \!=\! \left[\sum_{k_2=K_1+1}^{K} \frac{\partial \mathbf{B}}{\partial \theta_{k_2}}, \!\sum_{k_2=K_1+1}^{K}\frac{\partial \mathbf{B}}{\partial \phi_{k_2}}, \!\sum_{k_2=K_1+1}^{K}\frac{\partial \mathbf{B}}{\partial r_{k_2}}\right],
\end{equation}
\begin{equation}
	\mathbf{J}_{\rm FF} = [1,1]\otimes[1,1]^T,
\end{equation}
\begin{equation}
	\mathbf{J}_{\rm FN} = [1,1,1]\otimes[1,1]^T,
\end{equation}
\begin{equation}
	\mathbf{J}_{\rm NF} = [1,1]\otimes[1,1,1]^T,
\end{equation}
\begin{equation}
	\mathbf{J}_{\rm NN} = [1,1,1]\otimes[1,1,1]^T.
\end{equation}

As such, we  obtain the \emph{matrix closed-form} CRB for the mixed-field localization.
In particular, when all the targets are located in the far or near-field, the CRB matrices  reduces to $ \mathbf{CRB}(\boldsymbol{\eta}) = \mathbf{F}_{\rm FF}^{-1} $ and $ \mathbf{CRB}(\boldsymbol{\eta}) = \mathbf{F}_{\rm NN}^{-1} $, respectively.

\vspace{-5pt}
\section{Numerical Results}
In this section, numerical results are presented to demonstrate the effectiveness of our proposed mixed near-field and far-field target localization method.

\vspace{-10pt}
\subsection{System Setup and Benchmark Schemes}
System parameters are set as follows, unless otherwise specified.
The BS operates at the $f=10$ GHz frequency band, and the UPA at the BS is equipped with $61 \times 61$ antennas, for which the Rayleigh distance can be calculated as $ Z_{\rm R} =2\times(61^2\times0.015^2+61^2\times0.015^2)/{0.03} \approx 112$ m.
In addition, the number of RF chains is $N_{\rm} = 61$ with $M_x = 61$ and $M_y=1$.
The number of equivalent snapshots is set as $L = 500$.
There are four targets in the setup: 1) far-field target 1 located at $(\frac{\pi}{4},-\frac{\pi}{3})$ rad; 2) far-field target 2 located at $(\frac{\pi}{8},\frac{\pi}{3})$ rad; near-field target 3 located at ($\frac{\pi}{4}$ rad,$\frac{\pi}{4}$ rad, $30$ m); near-field target 4 located at ($\frac{\pi}{8}$ rad,$\frac{\pi}{4}$ rad, $40$ m).
Moreover, we consider the RMSE for angle and range estimation for performance evaluation, which are defined as {\small$\theta_{\mathrm{RMSE}} = \Big({\frac{1}{Q K} \sum_{q=1}^Q \sum_{k=1}^K\big(\hat{\theta}_k^{(q)}-\theta_k\big)^2}\Big)^{\frac{1}{2}}$}, {\small$\phi_{\mathrm{RMSE}} = \Big({\frac{1}{Q K} \sum_{q=1}^Q \sum_{k=1}^K\big(\hat{\phi}_k^{(q)}-\phi_k\big)^2}\Big)^{\frac{1}{2}}$} and {\small$r_{\mathrm{RMSE}} = \Big({\frac{1}{Q K} \sum_{q=1}^Q \sum_{k=1}^K\big(\hat{r}_k^{(q)}-r_k\big)^2}\Big)^{\frac{1}{2}}$}, where $\hat{\theta}_k^{(q)}$, $\hat{\phi}_k^{(q)}$ and $\hat{r}_k^{(q)}$ denote the estimated 
elevation and azimuth angles and range for the $k$-th target in the $q$-th Monte Carlo simulation over $Q = 100$ Monte Carlo simulations.
For performance comparison, we consider the following benchmark schemes:
\begin{itemize}
	\item {\rm \textbf{3D MUSIC algorithm:}}
	This scheme performs joint searches over the elevation angle, azimuth angle and range domains in the spatial spectrum.
	\item {\rm \textbf{Proposed method + DFT:}}
	For this scheme, we perform coarse angle estimation by the 2D-DFT method in Section \ref{sec:Extensions and Discussions} and then refine the angle estimation via 1D grid searches in small ranges, which significantly reduces the computational complexity.
	\item {\rm \textbf{Localization method ${d_0 = \frac{\lambda}{4}} $}:} This scheme is similar to our proposed method. It does has the angular ambiguity issue, but it has smaller array aperture and hence potentially reduced localization accuracy. 
    \item {\rm \textbf{Root CRB:}} Root CRB (RCRB) is used to serve as the estimation performance lower bound. 
\end{itemize}
\begin{figure}[t]
	\vspace{-14pt}
	\includegraphics[width=0.4\textwidth]{./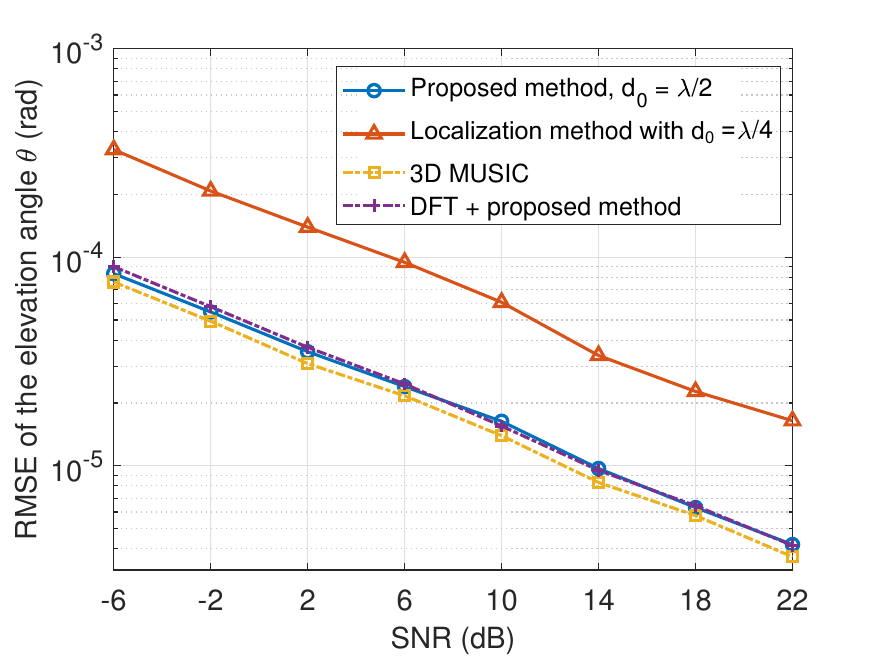}
	\centering
	\caption{Elevation angle RMSE versus SNR.}
	\label{fig:RMSE_theta}
	\vspace{-10pt}
\end{figure}
\begin{figure}[t]
	\vspace{-8pt}
	\includegraphics[width=0.4\textwidth]{./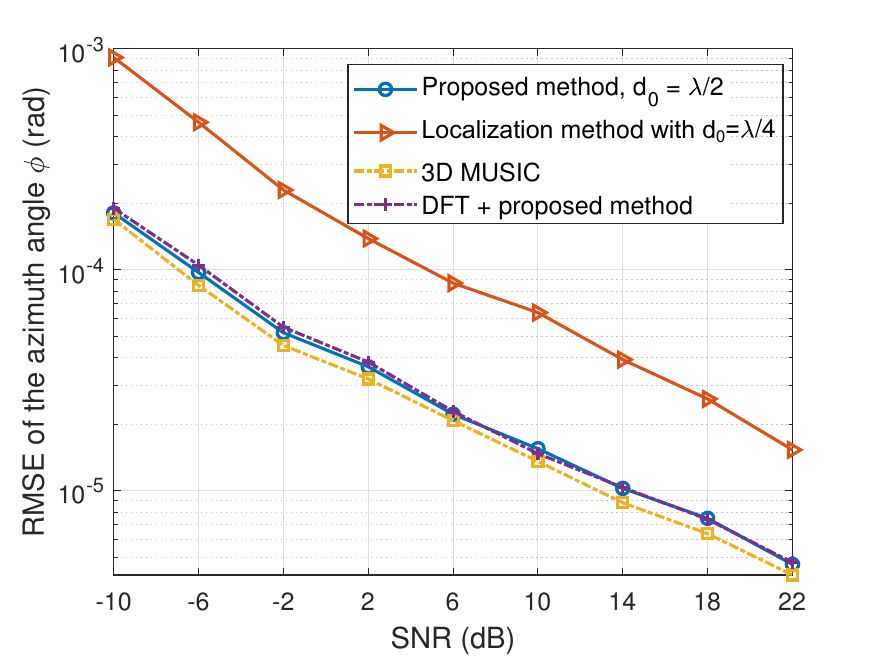}
	\centering
	\caption{Azimuth angle RMSE versus SNR.}
	\label{fig:RMSE_phi}
	\vspace{-16pt}
\end{figure}
\begin{figure}[t]
\vspace{-14pt}
	\includegraphics[width=0.4\textwidth]{./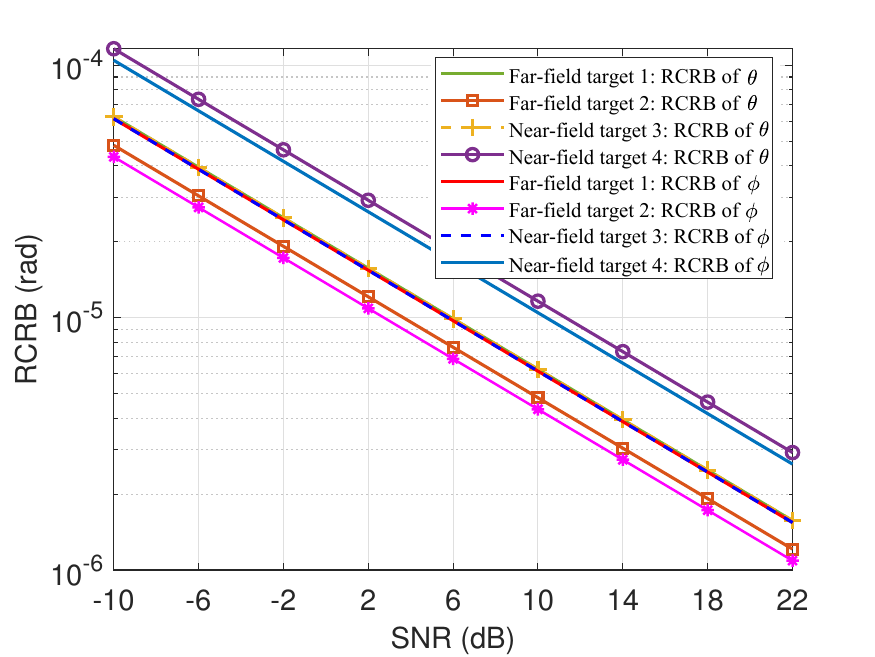}
	\centering
	\caption{Angle RCRB versus SNR.}
	\label{fig:CRB}
	\vspace{-10pt}
\end{figure}
\begin{figure}[t]
	\vspace{-6pt}
	\includegraphics[width=0.4\textwidth]{./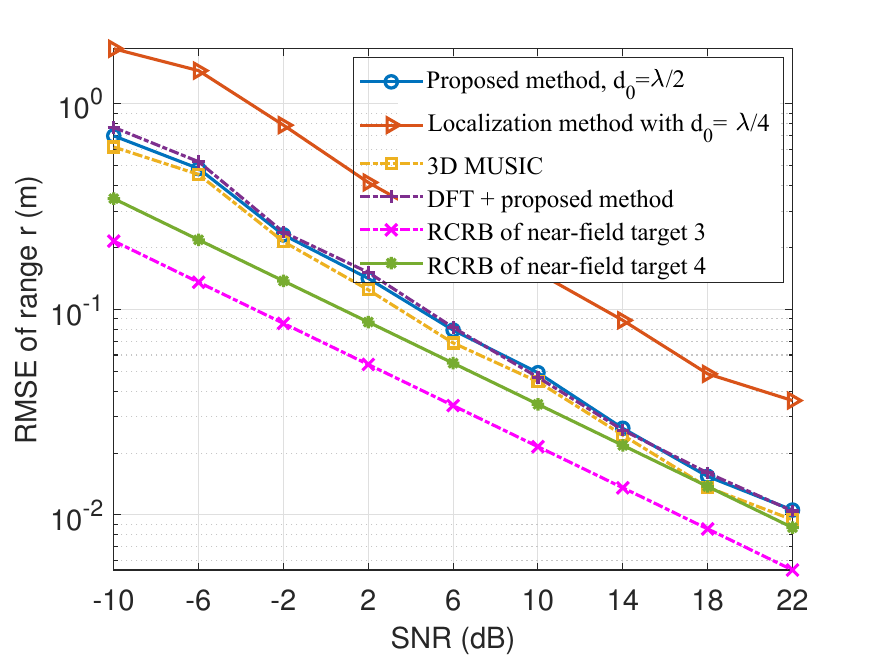}
	\centering
	\caption{Range RMSE versus SNR.}
	\label{fig:RMSE_r}
	\vspace{-18pt}
\end{figure}
\begin{table*}[t]
	\centering
	{
		\caption{Target classification under different methods.}
		\label{Table:classification method}
		\renewcommand{\arraystretch}{1.75}
		\begin{tabular}{|c|c|c|c|c|c|} 
			\hline
			\multicolumn{2}{|c|}{ \diagbox{Method}{Target index}}& \thead{Far-field target 1\\$(\frac{\pi}{4},\frac{\pi}{3},-)$} &  \thead{Far-field target 2\\ $(\frac{\pi}{8},\frac{\pi}{3},-)$} & \thead{Near-field target 3\\($\frac{\pi}{4}$,$\frac{\pi}{4}$, $30$ m)} & \thead{Near-field target 4\\($\frac{\pi}{8}$,$\frac{\pi}{4}$, $40$ m)}          
            \\ \hline
			\multirow{3}{*}{\text{Proposed method}} & $\theta$ & 0.78540 & 0.39270 &0.78539 &0.39270\\ \cline{2-6}
            & $\phi$ & 1.04720 & 1.04719 &0.78540 &0.78540\\ \cline{2-6}
			& $r$ & - & - &30.03 &40.06\\ \cline{1-6}
			\multirow{3}{*}{\cite{he2011efficient}} & $\theta$ & 0.78540 & 0.39268 &0.78540 &0.39269\\ \cline{2-6}
            & $\phi$ & 1.04720 & 1.04719 &0.78538 &0.78540\\ \cline{2-6}
			& $r$ & - & - &\XSolidBrush &\XSolidBrush\\ \cline{1-6}	
	\end{tabular}}
\end{table*}
\vspace{-10pt}
\subsection{Performance Analysis}
In Table \ref{Table:classification method}, we summarize the effectiveness of different mixed-field classification methods, where the reference SNR of each target (defined by $\frac{N|\beta_k|^2p_k}{r_k^2\sigma^2}$) is set as 10 dB.
It is observed that the proposed method and the benchmark scheme in \cite{he2011efficient} achieve similar accuracy in angle estimation.
However, the method proposed in \cite{he2011efficient} wrongly estimate the near-field targets as far-field targets and thus cannot estimate the ranges of near-field targets (see Section \ref{sec:Extensions and Discussions}).

In Fig. \ref{fig:RMSE_theta} and Fig. \ref{fig:RMSE_phi}, we plot the elevation and azimuth RMSE versus the reference SNR.
It is observed that the proposed method with $d=\lambda/2$ outperforms the case with $d=\lambda/4$. This is because, with the same number of antennas, the considered half-wavelength antenna spacing case can form a larger aperture, thereby enhancing sensing accuracy.
Moreover, our proposed mixed-field localization method exhibits similar RMSE performance to the counterpart of the 3D MUSIC method, demonstrating the effectiveness of our proposed method. 
In addition, the scheme by employing the 2D-DFT method performs slightly worse than the proposed method at the low-SNR regime, while achieving nearly the same RMSE performance at the high-SNR regime.
This is because the peak values searched by the 2D-DFT method are proportional to the SNR and the 2D-DFT method is prone to misjudgments at the low-SNR regime.
By comparing Figs. \ref{fig:RMSE_theta}--\ref{fig:CRB}, it can be observed that the angle RMSE of our proposed mixed-field localization method achieves the same order of the RCRB.

In Fig. \ref{fig:RMSE_r}, we plot the range RMSE versus SNR.
It is observed that the proposed mixed-field localization method achieves nearly the same performance as the 3D MUSIC algorithm, while the proposed method requires much lower computational complexity.
Moreover, although the scheme with a quarter-wavelength inter-antenna spacing does not suffer from angle ambiguity, our scheme with $d_0 = \frac{\lambda}{2}$ still achieves much high estimation accuracy.
This phenomenon arises because the UPA with $d_0 = \frac{\lambda}{2}$ has a larger array aperture as compared to the counterpart with $d_0 = \frac{\lambda}{4}$, resulting in a higher estimation resolution.
In addition,  Fig. \ref{fig:RMSE_r} shows that the RMSE performance of the proposed mixed-field target localization method is reasonably close to the RCRB.

Finally, we plot the running time versus the number of antennas in Fig. \ref{fig:running time}, where the numbers of search grids are set as $\ell_x = \ell_y = 10^{4}$ and $\ell_d = 10^{3} $.
It is observed that the proposed method as well as the ``proposed method + DFT'' significantly reduce the computational complexity.
This is because the proposed method decouple the angle and range estimation, while the 2D-DFT method first coarsely estimates target angles and then refine angle estimation by applying 1D MUSIC algorithm, which narrows the search regime.
\begin{figure}[t]
	\vspace{-14pt}
	\includegraphics[width=0.4\textwidth]{./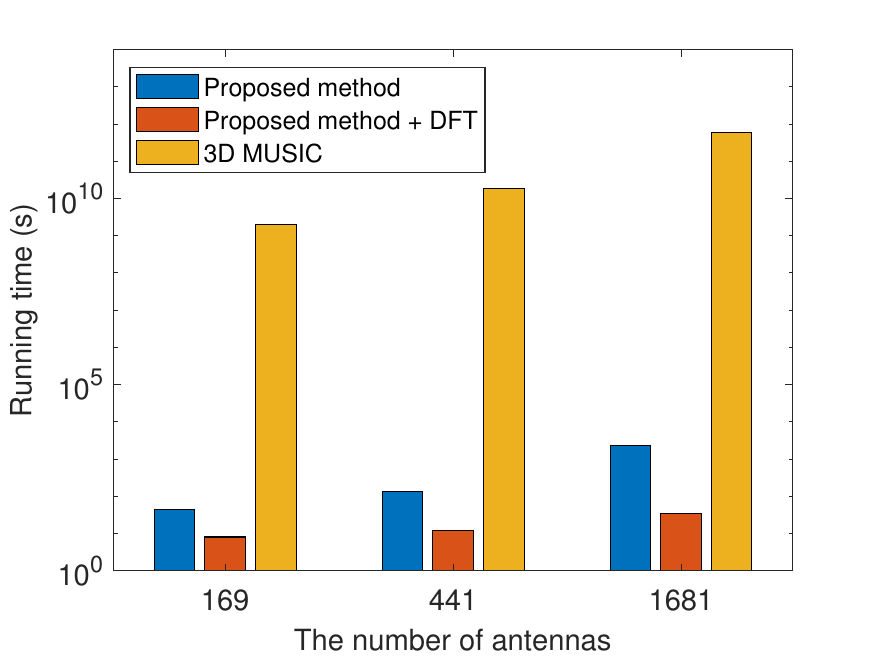}
	\centering
	\caption{Running time versus the number of antennas.}
	\label{fig:running time}
	\vspace{-16pt}
\end{figure}	
\section{Conclusions}
In this paper, we proposed a three-step mixed near-field and far-field target localization method by employing typical wireless communication infrastructures.
First, we judiciously designed the analog combining matrix at the BS over time to recover the received signals at antennas with minimum recovery errors.
Then, we decoupled the angle and range estimation problems via extracting the anti-anti-diagonal elements of the original covariance matrix, for which a decoupled 2D MUSIC algorithm was proposed to estimate target angles including the true far-field and near-field angles as well as their corresponding ambiguous angles.
Further, an effective classification method was proposed to distinguish mixed-field targets and resolve the angular ambiguity based on different patterns in the range-domain MUSIC spectrum of the three types of angles.
Finally, we derived the CRB for mixed-field target localization as the estimation error lower bound and numerical results were presented to demonstrate the effectiveness of the proposed method as compared with various benchmark schemes.
\bibliographystyle{IEEEtran}
\bibliography{IEEEabrv.bib}
\end{document}